\newcommand{\tp}{\normalfont \text{T}}
\newcommand{\Val}{\operatorname{Val}}
\newtheorem{remark}{Remark}
\newtheorem{cor}{Corollary}
\newtheorem{theorem}{Theorem}
\newtheorem{assumption}{Assumption}
\def\qedsymbol{\ensuremath{\Box}}
\def\qed{\ifhmode\unskip\nobreak\fi\quad\qedsymbol}
\def\frqed{\ifhmode\nobreak\hbox to5pt{\hfil}\nobreak%
	\hskip 0pt plus1fill\nobreak\fi\quad\qedsymbol\renewcommand{\qed}{}}
\def\QEDsymbol{\vrule width.6em height.5em depth.1em\relax}
\def\frQED{\ifhmode\nobreak\hbox to5pt{\hfil}\nobreak%
	\hskip 0pt plus1fill\nobreak\fi\quad\QEDsymbol\renewcommand{\qed}{}}
\def\QED{\ifhmode\unskip\nobreak\fi\quad\QEDsymbol}
\begin{document}
\title{\texttt{Flip Co-op}: Cooperative Takeovers in Shared Autonomy}
\author{Sandeep Banik and Naira Hovakimyan
\thanks{This research was supported in part by the National Aeronautics and Space Administration (NASA) under the cooperative agreement 80NSSC20M0229, University Leadership Initiative grant no. 80NSSC22M0070 and Air Force Office of Scientific Research (AFOSR) grant no. FA9550-21-1-0411.}
\thanks{The authors are with the Department of Mechanical Engineering at University of Illinois Urbana-Champaign, Urbana, IL, USA. Emails: \texttt{baniksan@illinois.edu; nhovakim@illinois.edu}.
}}

\maketitle

\begin{abstract}
Shared autonomy requires principled mechanisms for allocating and transferring control between a human and an autonomous agent. Existing approaches often rely on blending control inputs between human and autonomous agent or switching rules, which lack theoretical guarantees. This paper develops a game-theoretic framework for modeling cooperative takeover in shared autonomy. We formulate the switching interaction as a dynamic game in which authority is embedded directly into the system dynamics, resulting in Nash equilibrium(NE)-based strategies rather than ad hoc switching rules. We establish the existence and characterization of NE in the space of pure takeover strategies under stochastic human intent. For the class of linear–quadratic systems, we derive closed-form recursions for takeover strategies and saddle-point value functions, providing analytical insight and efficient computation of cooperative takeover policies. We further introduce a bimatrix potential game reformulation to address scenarios where human and autonomy utilities are not perfectly aligned, yielding a unifying potential function that preserves tractability while capturing intent deviations. The framework is applied to a vehicle trajectory tracking problem, demonstrating how equilibrium takeover strategies adapt across straight and curved path segments. The results highlight the trade-off between human adaptability and autonomous efficiency and illustrate the practical benefits of grounding shared autonomy in cooperative game theory.
\end{abstract}

\section{Introduction}
\label{sec:introduction}
Modern cyber–physical systems (CPS) increasingly rely on \textit{shared autonomy}, where humans and automated agents dynamically share control authority. 
This paradigm is critical in safety-critical domains such as autonomous driving, assistive robotics, and power grid operations, where neither fully manual nor fully automated control is sufficient. 
Autonomous systems excel at rapid computation and optimization but often lack situational awareness or adaptability in novel conditions. 
Conversely, humans bring contextual reasoning and ethical oversight but may be overwhelmed by workload or remain out of the loop during automation~\cite{javdaniSharedAutonomyHindsight2015, nikolaidisHumanRobotMutualAdaptation2017}. 
Failures to coordinate handovers/takeovers have been implicated in real-world incidents from autopilot disengagement in aviation to autonomous vehicle crashes, highlighting the need for principled frameworks for cooperative takeover.

Early approaches to shared autonomy typically combined human and robot commands through control blending or arbitration mechanisms. 
In these methods, the robot predicts the human’s intended goal and then determines the level of assistance to provide. 
A common strategy is to treat the human’s joystick or steering input as a noisy estimate of the intended command, and then compute a weighted combination of the user input and the robot’s autonomous action. 
If the system is confident about the user’s goal, the robot’s action dominates; if not, more weight is placed on the user’s raw command. 
This “predict-then-blend” paradigm has been implemented through Bayesian intent inference, POMDP-based planning, and linear arbitration rules, with applications in assistive teleoperation and robotic manipulation. 
Such methods allow users to complete tasks more efficiently and with less effort, but they rely heavily on the accuracy of the intent prediction model and the choice of blending rule. 
As a result, they often remain task-specific heuristics that lack guarantees of stability, robustness, or generalization across different domains~\cite{jeonSharedAutonomyLearned2020, reddySharedAutonomyDeep2018,javdaniSharedAutonomyHindsight2015}.

Beyond blending, game‑theoretic approaches model shared autonomy as an interaction between two decision‑making agents. 
In these models, the human and robot are treated as players in a game, each optimizing an objective while anticipating the other’s actions. 
Differential‑game formulations have been used for physical collaboration, where equilibrium strategies specify how partners exchange forces and share effort on a joint task~\cite{liDifferentialGameTheory2019}. 
In autonomous driving, interaction has been cast as a dynamic game in which the robot plans while explicitly accounting for how a human driver will respond, enabling not only conflict avoidance but also deliberate influence for safer coordination~\cite{sadighPlanningAutonomousCars2016}. Repeated‑interaction models analyze how humans adapt to robot strategies over time and how robots can optimize behavior in anticipation of such adaptation~\cite{nikolaidisGameTheoreticModelingHuman2017}. 
These works demonstrate that game‑theoretic reasoning captures mutual influence and strategic adaptation compared to control blending. 
However, most formulations still assume symmetric roles or require full knowledge of human utility, which is often unrealistic. 
The present work departs from these assumptions by addressing asymmetric authority and stochastic human intent within a cooperative game-theoretic framework.

Another central challenge in shared autonomy is how control authority is transferred or switched between the human and the autonomous system. 
Unlike blending approaches, which continuously mix inputs, switching models make explicit decisions about who has authority at a given time. 
Prior research explored mode switching in teleoperation, where the system automatically changed control modes or degrees of freedom based on predictions of the user’s intent, reducing cognitive load and clarifying the operator’s goal~\cite{draganPolicyblendingFormalismShared2013}. 
Broader frameworks for adjustable autonomy and mixed-initiative control extended this idea by allowing the system to dynamically allocate decision-making authority depending on task demands, communication delays, or operator workload~\cite{scerriAdjustableAutonomyReal2002}. 
In human–robot teaming, switching has also been studied through adaptive autonomy, where control shifts in real time according to the human’s performance, trust levels, or safety thresholds~\cite{loseyPhysicalInteractionCommunication2022}. 
These approaches emphasize flexibility and situational awareness, but they are typically heuristic and lack general equilibrium analysis. 
The present work differs by embedding switching of authority into a game-theoretic framework, ensuring cooperative equilibria exist and providing principled takeover policies rather than ad hoc rules.

A related body of work on resource takeover games provides the mathematical foundation for this study. The \texttt{FlipDyn} framework introduced the idea of modeling authority over a dynamical system as a competitive game in which two players strategically “flip” control to minimize or maximize long-term performance~\cite{banikFlipDynGameResource2022a}. This model, inspired by the \texttt{FlipIt} game~\cite{vandijkFlipItGameStealthy2013} of stealthy takeovers, formalized how equilibrium takeover strategies emerge when agents contest authority over linear–quadratic systems. Subsequent extensions generalized the framework to networked dynamical systems, capturing how takeovers propagate across interconnected nodes and deriving equilibrium conditions for graph-structured interactions~\cite{zhu2015game}. Other studies explored the role of cyber–physical resilience, applying game-theoretic reasoning to capture adversarial control switching in large-scale infrastructures~\cite{banikFlipDynGraphsResource2025}. Collectively, these works establish rigorous methods for analyzing control handovers in adversarial contexts. The present paper departs from the adversarial setting by translating these mathematical tools to \textit{cooperative shared autonomy}, where both agents pursue aligned objectives but still require principled strategies for when to yield or assert control.

This paper introduces \texttt{Flip Co-op}, a cooperative extension of \texttt{FlipDyn} that models takeover in shared autonomy as a potential game. Unlike adversarial formulations, \texttt{Flip Co-op} assumes a common objective between human and autonomy, aligning their incentives while preserving the asymmetry of human override authority. The framework captures stochastic intent, ensures the existence of Nash equilibria in takeover strategies, and provides closed-form results in linear–quadratic settings. Through this, it offers a principled alternative to blending and heuristic switching, grounding shared autonomy in cooperative game theory with theoretical guarantees. The main contributions are as follows:

\begin{enumerate}
    \item \textbf{Cooperative takeover formulation}:
    We formulate the human–autonomy takeover problem as an identical-interest dynamic game in which the switching variable is embedded directly into the system dynamics. This game encompasses dynamic takeovers, with state-based costs.
   \item \textbf{Existence and characterization of equilibria:}: We establish the existence of Nash equilibrium (NE) in the proposed game and characterize them in the space of pure strategies under stochastic human intent. The analysis yields explicit conditions for equilibrium takeover behavior in general dynamical systems.
    \item \textbf{Analytical solutions for linear–quadratic systems}:  
    For the important class of linear–quadratic (LQ) systems, we derive closed-form recursions for the equilibrium takeover strategies and saddle-point value functions. These analytical results enable efficient computation of cooperative takeover policies in continuous state spaces.
    In particular, we derive closed-form expressions for the NE takeover strategies and parameterized value of the game \emph{independent of the continuous state}.
    \item \textbf{Reformulation as a potential game:}:  
    We introduce a bimatrix potential game representation for settings where human and autonomy utilities are not perfectly aligned. This reformulation identifies a unifying potential function that preserves tractability while capturing deviations in human intent.
\end{enumerate}

We illustrate our results for the scalar, $n-$dimensional systems and vehicle path-tracking problem. 
Section~\ref{sec:Problem_Formulation} formalizes the cooperative takeover problem as an identical-interest dynamic game with switching dynamics. Section~\ref{sec:FlipCoop_General_Systems} establishes the existence and characterization of equilibrium strategies, followed by closed-form analytical results for the linear–quadratic setting in Section~\ref{sec:FlipCoop_Linear_Systems}. Section~\ref{sec:FlipCoop_Potential_games} introduces the potential game representation for partially misaligned utilities, and Section~\ref{sec:Application} demonstrates the framework on a trajectory tracking task. Finally, Section~\ref{sec:Conclusion} summarizes the findings and discusses directions for future work.

\section{Problem Formulation}\label{sec:Problem_Formulation}
Consider a discrete-time dynamical system controlled either by a human or an autonomous agent. The \texttt{FlipDyn} state, $\alpha_{k} \in \{\text{H},\text{A}\}$ indicates whether the human ($\alpha_k = \text{H}$) or the autonomous agent ($\alpha_k = \text{A}$) has control over the system at time $k$. 
The state evolution of the system controlled by the human is given by:
\begin{align}\label{eq:human_dynamics}
	x_{k+1} = F_{k}^{\text{H}}(x_k,u_k),
\end{align}
where $k$ denotes the discrete time index, taking values from the integer set $\mathcal{K} := \{1,2,\dots, L\} \subset \mathbb{N}$, $x_{k} \in \mathbb{R}^{n}$ is the state of the system, $u_{k} \in \mathbb{R}^m$ is the control input of the system, and $F_{k}^{\text{H}}: \mathbb{R}^{n} \times \mathbb{R}^{m} \rightarrow \mathbb{R}^{n}$ is the state transition function. Similarly, the state evolution under an autonomous agent is given by:
\begin{align}\label{eq:autonomous_dynamics}
	x_{k+1} = F_{k}^{\text{A}}(x_k,w_k),
\end{align}
where $F_{k}^{\text{A}}: \mathbb{R}^{n} \times \mathbb{R}^{p} \rightarrow \mathbb{R}^{n}$ is the state transition function under the autonomous agent control input $w_k \in \mathbb{R}^{p}$. 
We describe a takeover through the action $\pi^{j}_k \in \{0,1\}$, which denotes the action of the player $j \in \{\text{H},\text{A}\}$ at time $k$, where $j = \text{H}$ denotes the human and $j=\text{A}$ denotes the autonomous agent. The action $\pi_{k}^{j} = 1, \forall j$ corresponds to takeover/request to takeover and $\pi_{k}^{j} = 0$ otherwise (idle). The binary \texttt{FlipDyn} state updates based on the player's action and prior \texttt{FlipDyn} state. Given $\alpha_{k} = \text{H}$, the \texttt{FlipDyn} state at time $k+1$ is:
\begin{align}\label{eq:flip_state_H}
	\alpha_{k+1} &= \begin{cases}
		\alpha_{k}, & \text{if } \{\pi^{\text{H}}_{k} = 0,  \pi^{\text{A}}_{k} = 0\}, \\
        \text{H}, & \text{if } \{\pi^{\text{H}}_{k} = 0,  \pi^{\text{A}}_{k} = 1\}, \\
        \text{H}, & \text{if } \{\pi^{\text{H}}_{k} = 1,  \pi^{\text{A}}_{k} = 0\}, \\
        \text{A}, & \text{otherwise},
	\end{cases}
\end{align}
where $\pi^{\text{H}}_k = 0$ corresponds to human agent being idle or retaining the control of the system and $\pi^{\text{H}}_k = 1$ represents request to takeover. The first three terms in~\eqref{eq:flip_state_H} correspond to human retaining control of the system and the last term corresponds the request to takeover from the human and autonomous agent taking over. Similarly, given $\alpha_{k} = \text{A}$,  \texttt{FlipDyn} state update is:
\begin{align}\label{eq:flip_state_A}
	\alpha_{k+1} &= \begin{cases}
		\alpha_{k}, & \text{if } \{\pi^{\text{H}}_{k} = 0,  \pi^{\text{A}}_{k} = 0\}, \\
        \text{A}, & \text{if } \{\pi^{\text{H}}_{k} = 0,  \pi^{\text{A}}_{k} = 1\}, \\
        \text{H}, & \text{if } \{\pi^{\text{H}}_{k} = 1,  \pi^{\text{A}}_{k} = 0\} \text{ with probability } p_{k}, \\
        \text{A}, & \text{if } \{\pi^{\text{H}}_{k} = 1,  \pi^{\text{A}}_{k} = 0\} \text{ with probability } 1-p_{k}, \\
        \text{H}, & \text{otherwise},
	\end{cases}
\end{align}
where $\pi^{\text{H}}_k = 1$ corresponds to human agent takeover and $\pi^{\text{A}}_k = 1$ corresponds to request to takeover. The autonomous agent retains control of the system deterministically only when human agent is idle (second condition) and with probability $1-p$ (forth condition) when human agent chooses to takeover. The human agent takes over with probability $p$ when the autonomous does not request to takeover ($\pi^{\text{A}}_k = 0$) and deterministically takes over when both $\pi^{\text{A}}_k = \pi^{\text{H}}_k = 1$. Notice the difference between~\eqref{eq:flip_state_H} and~\eqref{eq:flip_state_A} is the human agent being uncertain with probability $p$ and ability to takeover the system irrespective of the autonomous agent's actions. Such a model captures the current design of shared control system where the human agent has higher authority over autonomous agents. 
The hybrid \texttt{FlipDyn} dynamics is compactly described by:
\begin{align}
	\label{eq:FlipDyn_compact}
	\alpha_{k+1} &= \begin{cases}
	                  \begin{aligned}
                            & \bar{\pi}^{\text{H}}_k\bar{\pi}^{\text{A}}_k\alpha_{k} + \pi_{k}^{\text{A}}\pi_{k}^{\text{H}}\text{A}\\ & +  \left(\bar{\pi_{k}}^\text{A}\pi_{k}^{\text{H}} + \pi_{k}^{\text{A}}{\bar{\pi_{k}}}^\text{H}\right)\text{H }, 
	                    \end{aligned} & \text{if } \alpha_{k} = \text{H}, \\[10pt] 
                        \begin{aligned}
                            & \bar{\pi}^{\text{H}}_k\bar{\pi}^{\text{A}}_k\alpha_{k} +  \pi_{k}^{\text{A}}{\bar{\pi_{k}}}^\text{H}\text{A} \\ &  + \pi_{k}^{\text{H}}\text{H},
	                    \end{aligned} & \begin{aligned}
	                        & \text{If } \alpha_{k} = \text{A} \\ &
                                \text{with probability } p_{k}, 
	                    \end{aligned} \\[10 pt] 
                        \begin{aligned}
                            & \bar{\pi}^{\text{H}}_k\bar{\pi}^{\text{A}}_k\alpha_{k} + \pi_{k}^{\text{A}}\pi_{k}^{\text{H}}\text{H}\\ & +  \left({\bar{\pi_{k}}}^\text{A}\pi_{k}^{\text{H}} + \pi_{k}^{\text{A}}{\bar{\pi_{k}}}^\text{H}\right)\text{A } 
	                    \end{aligned}, & \begin{aligned}
	                        & \text{ }  \\ &
                                \text{with probability } 1-{p}_{k}, 
	                    \end{aligned}
                	\end{cases} 
\end{align}
where $\bar{x} := 1 - x$ is a binary variable. Takeovers are mutually exclusive, i.e., only one player is in control of the system at any given time. The continuous state $x_{k+1}$ at time $k+1$ is dependent on $\alpha_{k+1}$. 
In this work, we aim to solve for a takeover strategy for both the human and autonomous agent. Given a non-zero initial state $x_{1}$, we pose the takeover problem as a identical interest dynamic game described by the dynamics~\eqref{eq:human_dynamics}~\eqref{eq:autonomous_dynamics} and~\eqref{eq:FlipDyn_compact} over a finite-horizon $L$, where both the human and autonomous agent aims to minimize a net cost given by:
\begin{equation}\label{eq:obj_def_alpha}
	\begin{aligned}
		J(x_{1}, \alpha_{1}, \{\pi^{\text{H}}_{\mathbf{L}}\}, \{\pi^{\text{A}}_{\mathbf{L}}\}) & = g_{L+1}(x_{L+1}, \alpha_{L+1}) + \sum_{t=1}^{L} g_t(x_t, \alpha_t) \\  & + \pi_{t}^{\text{H}}h_t(x_t) + \pi^{\text{A}}_{t}a_t(x_t), 
	\end{aligned}
\end{equation}
where $g_t(x_t, \alpha_t): \mathbb{R}^{n} \times \{\text{H},\text{A}\} \rightarrow \mathbb{R}$ represents the state cost with $g_{L+1}(x_{L+1}, \alpha_{L+1}): \mathbb{R}^{n} \times \{\text{H},\text{A}\} \rightarrow \mathbb{R}$ representing the terminal state cost, $h_t(x_t): \mathbb{R}^{n} \rightarrow \mathbb{R}$  and $a_t(x_t):\mathbb{R}^{n} \rightarrow \mathbb{R}$ are the instantaneous takeover costs for the human and autonomous agent, respectively. Let $\{\pi_{\mathbf{L}}^j\} := \{\pi_1^j, \dots, \pi_{L}^j\}, j \in \{\text{H},\text{A}\}$. We pose the problem~\eqref{eq:obj_def_alpha} as a dynamic game between the human and autonomous agent, termed as the \emph{\texttt{Flip Co-op}}, where both players aim to optimize their own takeover strategies. In particular, we consider a subclass of games known as identical interest games~\cite{hespanha2017noncooperative}, where both agents are aligned in minimizing a common cost function~\eqref{eq:obj_def_alpha}. 

\smallskip

We seek to find Nash Equilibrium (NE) solutions of the game~\eqref{eq:obj_def_alpha}. To guarantee the existence of a pure or mixed NE takeover strategy, we expand the set of player policies to behavioral strategies --  probability distributions over the space of discrete actions at each time step~\cite{hespanha2017noncooperative}.
Specifically, let
\begin{equation}\label{eq:behav_pol}
	y_{k}^{\alpha_{k}} = \begin{bmatrix}
		1 - \beta_{k}^{\alpha_{k}} & \beta_{k}^{\alpha_{k}}
	\end{bmatrix}^{\text{T}} \text{ and \hfill} z_{k}^{\alpha_{k}} = \begin{bmatrix}
	1 - \gamma_{k}^{\alpha_{k}} & \gamma_{k}^{\alpha_{k}}
\end{bmatrix}^{\text{T}},
\end{equation}
be the behavioral strategies for the human and autonomous agent at time instant $k$ for the \texttt{FlipDyn} state $\alpha_k$, such that $\beta_{k}^{\alpha_{k}} \in [0,1]$ and $\gamma_{k}^{\alpha_{k}} \in [0,1]$, respectively. The takeover actions
\[
\pi_{k}^{\text{H}} \sim y_{k}^{\alpha_{k}}, \quad  \pi_{k}^{\text{A}} \sim z_{k}^{\alpha_{k}} 
\]
of each player at any time $k$ are sampled from the corresponding behavioral strategy. The behavioral strategies are given by $y_{k}^{\alpha_{k}}, z_{k}^{\alpha_{k}}  \in \Delta_{2}$, where $\Delta_{2}$ is the probability simplex in two dimensions. 
Over the finite-horizon $L$, let $y_{\mathbf{L}} := \{y_{1}^{\alpha_{1}}, y_{2}^{\alpha_{2}}, \dots, y_{L}^{\alpha_{L}}\} \in \Delta^{L}_{2}$ and $z_{\mathbf{L}} := \{z_{1}^{\alpha_{1}}, z_{2}^{\alpha_{2}}, \dots, z_{L}^{\alpha_{L}}\} \in \Delta^{L}_{2}$ be the sequence of human and autonomous behavioral strategies. Thus, the expected outcome of the game~\eqref{eq:obj_def_alpha} is given by:
\begin{equation}
    \label{eq:opti_E_cost}
	J_{E}(x_1, \alpha_{1}, y_{\mathbf{L}}, z_{\mathbf{L}}) :=  \mathbb{E}[J( x_{1}, \alpha_{1}, \{\pi^{\text{H}}_{L}\}, \{\pi^{\text{A}}_{L}\})],
\end{equation}
where the expectation is evaluated with respect to the distributions $y_{\mathbf{L}}$ and $z_{\mathbf{L}}$. Specifically, we seek a saddle-point solution ($y_{\mathbf{L}}^{*}, z_{\mathbf{L}}^{*}$) in the space of behavioral strategies such that for any non-zero initial state $x_0 \in \mathbb{R}^{n}, \alpha_0 \in \{0,1\}$,
\begin{equation*}
    \begin{aligned}
         J_E(x_0, \alpha_0, y_{\mathbf{L}}^{*}, z_{\mathbf{L}}^{*}) \leq J_E(x_0, \alpha_0, y_{\mathbf{L}}, z_{\mathbf{L}}^{*}) \leq J_E(x_0, \alpha_0, y_{\mathbf{L}}^{*}, z_{\mathbf{L}}).
    \end{aligned}
\end{equation*}
The \texttt{Flip Co-op} game is completely defined by the expected cost~\eqref{eq:opti_E_cost} and the space of player takeover strategies subject to the dynamics~\eqref{eq:human_dynamics},~\eqref{eq:autonomous_dynamics} and \eqref{eq:FlipDyn_compact}. In the next section, we derive the outcome of the \texttt{Flip Co-op} game for both the \texttt{FlipDyn} state of $\alpha = \text{H}$ and $\alpha = \text{A}$ for general systems.

\section{\texttt{Flip Co-op} for general systems}\label{sec:FlipCoop_General_Systems}
We will begin by deriving the NE takeover strategies of the \texttt{Flip Co-op} game, given any control policy pair  $u_{\mathbf{L}}, w_{\mathbf{L}}$, in each of the two takeover. Our approach begins by defining the saddle-point value of the game.
\subsection{Saddle-point value}
Given a \texttt{FlipDyn} state at time $k \in \mathcal{K}$, the saddle-point value comprises  an instantaneous state cost and an additive cost-to-go based on the players takeover actions. The cost-to-go is determined via a cost-to-go matrix in each \texttt{FlipDyn} state $\alpha_{k} = \text{H}$ and $\alpha_{k} = \text{A}$, represented by $\Xi_{k+1}^{\text{H}} \in \mathbb{R}^{2 \times 2}$ and $\Xi_{k+1}^{\text{A}} \in \mathbb{R}^{2 \times 2}$, respectively. Let $V^{\text{H}}_k(x, \Xi_{k+1}^{\text{H}})$ and $V^{\text{A}}_k(x, \Xi_{k+1}^{\text{A}})$ be the saddle-point values as a function of the continuous state $x$ and cost-to-go matrices for $\alpha_k = \text{H}$ and $\alpha_k = \text{A}$, respectively. The entries of the cost-to-go matrix $\Xi^{\text{H}}_{k+1}$ corresponding to each pair of takeover actions are given by:
\begin{equation}\label{eq:Cost_to_go_H}
    \begin{aligned}
		& \begin{matrix} & \hphantom{000} \text{Idle} & & \hphantom{v_{k+1}^0(.,.} \text{Takeover}\end{matrix} \\
		\begin{matrix} \text{Idle} \\[5pt] \text{Request to}\\\text{takeover} \end{matrix} & \underbrace{\begin{bmatrix}
			v_{k+1}^{\text{H}} &  v_{k+1}^{\text{H}} + a_k(x)  \\[8pt]
			v_{k+1}^{\text{H}} + h_k(x) &  v_{k+1}^{\text{A}} + h_k(x) + a_k(x) \\[5pt] 
		\end{bmatrix}}_{\Xi_{k+1}^{\text{H}}}
	\end{aligned},
\end{equation}
\begin{align}
    \label{eq:V_k_0} \text{where } \ & v_{k+1}^{\text{H}} := V_{k+1}^{\text{H}}\left(F_k^{\text{H}}(x,u_k),\Xi_{k+2}^{\text{H}}\right), \\
    \label{eq:V_k_1} & v_{k+1}^{\text{A}} := V_{k+1}^{\text{A}}(F_k^{\text{A}}(x,w_k),\Xi_{k+2}^{\text{A}}).
\end{align}
The row and column entries of $\Xi^{\text{H}}_{k+1}$ are based on the player's actions, described by~\eqref{eq:FlipDyn_compact} and its associated dynamics~\eqref{eq:human_dynamics},~\eqref{eq:autonomous_dynamics}. We will refer $X(i,j)$ as the $(i,j)$-th entry of the matrix $X$. The first row entries $\Xi_{k+1}^{\text{H}}(1,1)$ and $\Xi_{k+1}^{\text{H}}(1,2)$ corresponds to the human agent remaining idle, which prevents the autonomous agent to takeover despite the column action of takeover. The second row entries $\Xi_{k+1}^{\text{H}}(2,1)$ and $\Xi_{k+1}^{\text{H}}(2,2)$ corresponds to action of request to takeover, and transition to the autonomous agent ($v^{\text{A}}_{k+1}$) only when the autonomous agent agrees (column action of takeover). The entries of $\Xi_{k+1}^{\text{H}}$ couples the saddle-point value in each \texttt{FlipDyn} state. 
At time $k$ for a given human agent control policy $u_k$, state $x$ and $\alpha_k=\text{H}$, the saddle-point value satisfies
\begin{equation}
    \label{eq:V_k^0_cost_to_go}
	V^{\text{H}}_k(x, \Xi_{k+1}^{\text{H}}) = g_k(x,\text{H})  + \Val(\Xi^{\text{H}}_{k+1}), 
\end{equation}
where $\Val(X_{k+1}^{\alpha_{k}}):= \min_{y_{k}^{\alpha_{k}}} \min_{z_{k}^{\alpha_{k}}} y_{k}^{{\alpha_{k}}^{\tp}}X_{k+1}z_{k}^{\alpha_{k}}$, represents the saddle-point value of the identical matrix $X_{k+1}$ for the \texttt{FlipDyn} state $\alpha_{k}$, and $\Xi^{0}_{k+1} \in \mathbb{R}^{2 \times 2}$ is the cost-to-go zero-sum matrix. The human's (row player) and autonomous agent's (column player) action results in either an entry within $\Xi^{\text{H}}_{k+1}$ (if the matrix has a saddle point in pure strategies) or in the expected sense, resulting in a cost-to-go from state $x$ at time $k$.

Similarly, for $\alpha_k = \text{A}, \forall k$, the cost-to-go matrix entries $\Xi_{k+1}^{\text{A}}$ are:
\begin{equation}\label{eq:Cost_to_go_A}
    \begin{aligned}
		& \begin{matrix} & \hphantom{0000} \text{Idle} & & \hphantom{v_{k+1}^0000} \text{Request to takeover}\end{matrix} \\
		\begin{matrix} \text{Idle} \\[10pt] \text{Takeover}\end{matrix} & \underbrace{\begin{bmatrix}
			v_{k+1}^{\text{A}} \hphantom{000}  &  \hphantom{00} v_{k+1}^{\text{A}} + a_k(x) \hphantom{0}  \\[5pt]
			\begin{matrix}
                    p_{k}v_{k+1}^{\text{H}} + (1-p_{k})v_{k+1}^{\text{A}} \\
                        + h_k(x)       
                \end{matrix}
              &  \begin{matrix}
                    v_{k+1}^{\text{H}} + h_k(x) \\
                        + a_k(x)       
                \end{matrix}
		\end{bmatrix}}_{\Xi_{k+1}^{\text{A}}}.
	\end{aligned}
\end{equation}
The first row entries of $\Xi_{k+1}^{\text{A}}$ corresponds to autonomous agent retaining control of the system as the human agent remains idle. The second row entry $\Xi_{k+1}^{\text{A}}(2,1)$ corresponds to the autonomous agent remaining idle, but the human agent can takeover the system with a probability $p$ and the system remaining in control of the autonomous agent with probability $1-p$. Finally, the entry $\Xi_{k+1}^{\text{A}}(2,2)$ corresponds to takeover of the system by the human agent. Analogous to~\eqref{eq:V_k^0_cost_to_go} the saddle-point value for $\alpha_{k} = \text{A}$ satisfies:
\begin{equation}
    \label{eq:V_k^1_cost_to_go}
	\text{with} \ V^{\text{A}}_k(x, \Xi_{k+1}^{1}) = g_k(x,\text{A}) +  \Val(\Xi^{\text{A}}_{k+1}).
\end{equation}

With the saddle-point values established in each of the \texttt{FlipDyn} states, in the following subsection, we will characterize the NE takeover strategies and the saddle-point values over the time-horizon $L$.

\subsection{NE takeover strategies of the \texttt{Flip Co-op} game}
In order to characterize the saddle-point value of the game, we restrict the cost functions to belong to a finite domain, stated in the following assumption.
\begin{assumption}\label{ast:general_costs}
    [Non-negative costs] At any time instant $k \in \mathcal{K}$, the state and takeover costs $g_k(x,\alpha), h_k(x), a_k(x)$ for all $x \in \mathbb{R}^{n},$ and $\alpha \in \{\text{H},\text{A}\}$ are non-negative $(\mathbb{R}_{\geq 0})$. 
\end{assumption}

Assumption~\ref{ast:general_costs} ensures that cost comparisons within the cost-to-go matrix are sign-consistent, allowing for a clear characterization of player strategies, whether pure or mixed.
Under Assumption~\ref{ast:general_costs}, we derive the saddle-point value and takeover policies for the finite time-horizon in the following result.

\begin{theorem}\label{th:NE_Val_gen_FDC}
    (Case $\alpha_k = \text{H}$) Under Assumption~\ref{ast:general_costs} and known control policies, $u_{\mathbf{L}}$ and $w_{\mathbf{L}}$, the unique NE takeover strategies of the \texttt{Flip Co-op} game~\eqref{eq:opti_E_cost} at time $k \in \mathcal{K}$, subject to the continuous state dynamics~\eqref{eq:human_dynamics},~\eqref{eq:autonomous_dynamics} and \texttt{FlipDyn} dynamics~\eqref{eq:FlipDyn_compact} are given by:
    \begin{align}
    \begin{split}\label{eq:TP_gen_H}
            \{\beta^{\text{H}*}_{k}, \gamma^{\text{H}*}_{k}\} = \begin{cases}
            \{0,0\}, & \text{if } v^{\text{H}}_{k+1} < \begin{matrix}
                v^{\text{A}}_{k+1} + h_{k}(x) + a_{k}(x),
            \end{matrix} \\[5 pt]
            \{1,1\}, & \text{otherwise}.
            \end{cases} 
    \end{split}
    \end{align}
    
    The saddle-point value is given by:
    \begin{align}\label{eq:Val_gen_H}
        v_{k}^{\text{H}} = 
        \begin{cases}
            \begin{aligned}
				& g_k(x,\text{H}) + v_{k+1}^{\text{H}},
            \end{aligned} &\text{if } v^{\text{H}}_{k+1} < \begin{matrix}
                v^{\text{A}}_{k+1} + h_{k}(x) \\ + a_{k}(x),
            \end{matrix} \\
            \begin{aligned}
				& g_k(x,\text{H}) + v_{k+1}^{\text{A}} \\ & + h_{k}(x) + a_{k}(x),
            \end{aligned} &\text{otherwise},
		\end{cases} 
	\end{align}
    where $v_{k}^{\text{H}}:= V_{k}^{\text{H}}(x,\Xi_{k+1}^{\text{H}})$ and 
    $v_{k}^{\text{A}}:= V_{k}^{\text{A}}(x,\Xi_{k+1}^{\text{A}})$.
    \medskip
    
    (Case $\alpha_k = \text{A}$) The unique NE takeover strategies are 
    \begin{align}
    \begin{split}\label{eq:TP_gen_A}
            \{\beta_{k}^{\text{A}*},\gamma_{k}^{\text{A}*}\}  = \begin{cases}
            \{0,0\}, & \text{if } \ 
                \begin{matrix}
                    v^{\text{A}}_{k+1} - v^{\text{H}}_{k+1} < \dfrac{h_{k}(x)}{p_{k}}
                \end{matrix}, \\[10pt]
            \{1,0\}, & \text{if } \ 
                \begin{matrix}
                    v^{\text{A}}_{k+1} - v^{\text{H}}_{k+1} \leq \dfrac{a_{k}(x)}{1 - p_{k}}  \\
                    v^{\text{A}}_{k+1} - v^{\text{H}}_{k+1} \geq \dfrac{h_{k}(x)}{p_{k}},
                \end{matrix} \\
            \{1,1\}, & \text{otherwise.}
            \end{cases} 
    \end{split}
    \end{align}
    The saddle-point value is given by:
    \begin{align}\label{eq:Val_gen_A}
        v_{k}^{\text{A}} = 
        \begin{cases}
            \begin{aligned}
				& g_k(x,\text{A}) + v_{k+1}^{\text{A}},
            \end{aligned} &\text{if } \begin{matrix}
                    v^{\text{A}}_{k+1} - v^{\text{H}}_{k+1} < \dfrac{h_{k}(x)}{p_{k}}
                \end{matrix}, \\[10pt]
            \begin{aligned}
				& g_k(x,\text{A}) + p_{k}v^{\text{H}}_{k+1} + \\[5pt] & (1 - p_{k})v^{\text{A}}_{k+1} + h_k(x),
            \end{aligned} &\text{if } \begin{matrix}
                    v^{\text{A}}_{k+1} - v^{\text{H}}_{k+1} \leq \dfrac{a_{k}(x)}{1 - p_{k}}  \\
                    v^{\text{A}}_{k+1} - v^{\text{H}}_{k+1} \geq \dfrac{h_{k}(x)}{p_{k}},
                \end{matrix} \\
            \begin{aligned}
                g_k(x,\text{A}) + v_{k+1}^{\text{H}} + \\
                h_{k}(x) + a_{k}(x),
            \end{aligned}
             & \text{otherwise}.
		\end{cases} 
	\end{align}
    The boundary condition at $k = L$ is given by:
    \begin{gather}\label{eq:b_cond_NE_Val_gen_FDC}
        \Xi_{L+2}^{\text{H}} := \mathbf{0}_{2 \times 2}, \ \Xi_{L+2}^{\text{A}} := \mathbf{0}_{2 \times 2}, 
    \end{gather}
    where $\mathbf{0}_{i \times j} \in \mathbb{R}^{i \times j}$ represents a matrix of zeros.
    \frqed
\end{theorem}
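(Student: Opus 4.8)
The plan is to prove both cases simultaneously by backward induction on $k$, initialized at the terminal condition~\eqref{eq:b_cond_NE_Val_gen_FDC} and propagating the cost-to-go matrices $\Xi^{\text{H}}_{k+1},\Xi^{\text{A}}_{k+1}$ downward in time; assuming $v^{\text{H}}_{k+1}$ and $v^{\text{A}}_{k+1}$ are already characterized at stage $k+1$, I would evaluate $\Val(\Xi^{\alpha_k}_{k+1})$ and read off the minimizing actions. The central reduction is that $\Val$ collapses to the smallest matrix entry: since $y^{\alpha^{\tp}}Xz$ is bilinear in the pair $(y,z)$ and each behavioral strategy ranges over the simplex $\Delta_{2}$, the joint minimization $\min_y\min_z$ is attained at a vertex pair, so $\Val(X)=\min_{i,j}X(i,j)$ and the optimizer is a pure strategy pair. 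This is what lets me assert that the NE are pure, reduces each case to a finite comparison of entries, and --- via strictness of the winning inequality --- furnishes uniqueness of the argmin and hence of the NE.

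For $\alpha_k=\text{H}$, Assumption~\ref{ast:general_costs} makes $\Xi^{\text{H}}_{k+1}(1,2)=v^{\text{H}}_{k+1}+a_k(x)$ and $\Xi^{\text{H}}_{k+1}(2,1)=v^{\text{H}}_{k+1}+h_k(x)$ dominated by $\Xi^{\text{H}}_{k+1}(1,1)=v^{\text{H}}_{k+1}$, because $h_k,a_k\ge 0$. The minimum is therefore either $(1,1)$ or $(2,2)=v^{\text{A}}_{k+1}+h_k(x)+a_k(x)$, and the single comparison of $v^{\text{H}}_{k+1}$ against $v^{\text{A}}_{k+1}+h_k(x)+a_k(x)$ yields the two branches of~\eqref{eq:TP_gen_H}; adding the instantaneous cost $g_k(x,\text{H})$ gives~\eqref{eq:Val_gen_H}.

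For $\alpha_k=\text{A}$, the entry $(1,2)$ is again dominated by $(1,1)=v^{\text{A}}_{k+1}$, leaving $(1,1)$, $(2,1)=p_k v^{\text{H}}_{k+1}+(1-p_k)v^{\text{A}}_{k+1}+h_k(x)$, and $(2,2)=v^{\text{H}}_{k+1}+h_k(x)+a_k(x)$. Writing $D:=v^{\text{A}}_{k+1}-v^{\text{H}}_{k+1}$ and holding $v^{\text{H}}_{k+1}$ fixed, these three entries are affine in $D$ with slopes $1$, $1-p_k$, and $0$, respectively; their lower envelope is governed by the pairwise crossings $D=h_k(x)/p_k$ (between $(1,1)$ and $(2,1)$) and $D=a_k(x)/(1-p_k)$ (between $(2,1)$ and $(2,2)$), which are exactly the thresholds of~\eqref{eq:TP_gen_A}. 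Reading off the minimizing vertex on each interval produces $\{0,0\}$, $\{1,0\}$, $\{1,1\}$ and, after adding $g_k(x,\text{A})$, the three branches of~\eqref{eq:Val_gen_A}.

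I expect the main obstacle to be the Case $\alpha_k=\text{A}$ envelope bookkeeping: I must verify that the two thresholds partition the line into exactly the three advertised regions and that the claimed vertex is \emph{globally} minimal on each, rather than only along a single sequential best response. This requires checking the relative ordering of $h_k(x)/p_k$ and $a_k(x)/(1-p_k)$ together with the direct $(1,1)$-versus-$(2,2)$ comparison, and treating the degenerate cases $p_k\in\{0,1\}$ and coincident thresholds so that the $\{1,0\}$ branch and the ``otherwise'' branch remain well-defined. Confirming that non-negativity keeps every comparison sign-consistent, and that the winning inequalities are strict at equilibrium, then secures the claimed uniqueness.
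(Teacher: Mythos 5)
Your Case $\alpha_k=\text{H}$ analysis is sound and agrees with the paper: dominance of the off-diagonal entries under Assumption~\ref{ast:general_costs} reduces everything to comparing $\Xi^{\text{H}}_{k+1}(1,1)$ with $\Xi^{\text{H}}_{k+1}(2,2)$, and there the global minimum and the theorem's regions coincide. The genuine gap is in Case $\alpha_k=\text{A}$: your central reduction --- $\Val$ equals the smallest matrix entry, the NE is its (unique) argmin, and the thresholds $h_k(x)/p_k$ and $a_k(x)/(1-p_k)$ are the breakpoints of the lower envelope --- does not reproduce the theorem, because the theorem's regions are \emph{not} the global-minimum regions. The paper's proof never minimizes jointly over all four entries; it enumerates the three pure profiles and derives, for each, the \emph{unilateral-deviation} (best-response) conditions: $(1,1)$ needs only $v^{\text{A}}_{k+1}-v^{\text{H}}_{k+1}\le h_k(x)/p_k$, $(2,1)$ needs the two-sided condition, and $(2,2)$ needs $v^{\text{A}}_{k+1}-v^{\text{H}}_{k+1}> a_k(x)/(1-p_k)$. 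The envelope picture and the NE-condition picture coincide only when $h_k(x)/p_k\le a_k(x)/(1-p_k)$; when $h_k(x)(1-p_k)>p_k a_k(x)$ the theorem's middle branch is empty and the lower envelope's single breakpoint sits at $h_k(x)+a_k(x)$, strictly below $h_k(x)/p_k$, so the two characterizations disagree on an interval of positive length.

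Concretely, take $p_k=0.1$, $h_k(x)=a_k(x)=1$, $v^{\text{H}}_{k+1}=0$, $v^{\text{A}}_{k+1}=5$. Then $\Xi^{\text{A}}_{k+1}$ has entries $(1,1)=5$, $(1,2)=6$, $(2,1)=5.5$, $(2,2)=2$. Your method outputs the global minimizer $\{1,1\}$ with cost-to-go $2$, whereas the first branch of~\eqref{eq:TP_gen_A} applies (since $5<h_k(x)/p_k=10$) and prescribes $\{0,0\}$ with value $v^{\text{A}}_{k+1}=5$ in~\eqref{eq:Val_gen_A}. So the step you deferred as ``envelope bookkeeping'' is not a check that will go through: on the regime $h_k(x)(1-p_k)>p_k a_k(x)$ your plan proves a statement different from the theorem. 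The same example also breaks your uniqueness argument: both $(1,1)$ and $(2,2)$ survive all unilateral deviations, so a unique global argmin does not preclude other pure NE of the identical-interest stage game; this is precisely why the paper argues profile-by-profile and appeals to admissibility and equilibrium-selection remarks rather than to global minimality. To prove the theorem as stated you must follow that route: fix each candidate profile, impose the no-profitable-deviation inequalities, and adopt the theorem's case ordering as the selection rule among coexisting equilibria.
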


\begin{proof}
    We will only derive the NE takeover strategies and saddle-point value for case of $\alpha_k = \text{A}$. We leave out the derivations for $\alpha = \text{H}$ as they are analogous to $\alpha = \text{A}$. There are three pure Nash equilibrium policies to consider for the $2 \times 2$ matrix game defined by the matrix in~\eqref{eq:Cost_to_go_A}.
    \smallskip
    
    i) \underline{Pure strategy - \{Idle, Idle\}}:
    Both the human and autonomous agent choose the action of staying idle. 
    \noindent We determine the conditions under which such a pure policy is feasible. Under Assumption~\ref{ast:general_costs}, we compare the entries of $\Xi_{k+1}^{\text{A}}$ when the human agent opts to remain idle to obtain the condition: 
    \begin{equation}
        \begin{aligned}
            v_{k+1}^{\text{A}} < v_{k+1}^{\text{A}} + a_{k}(x),
        \end{aligned}
    \end{equation}
    which indicates that the autonomous agent always chooses to play idle. Next, we compare the entry of $ v_{k+1}^{\text{A}}$ against a human agent takeover action to obtain:
    \begin{equation*}
        \begin{aligned}
            v_{k+1}^{\text{A}} & \leq p_{k}v_{k+1}^{\text{H}} + (1 - p_{k})v_{k+1}^{\text{A}} + h_{k}(x), 
        \end{aligned}
    \end{equation*}
    \begin{equation*}
        \begin{aligned}
         \Rightarrow & v_{k+1}^{\text{A}} - v_{k+1}^{\text{H}} \leq \dfrac{h_k(x)}{p_{k}}.
        \end{aligned}
    \end{equation*}
    Notice that such a strategy might be a non-admissible NE. 
    The saddle-point value corresponding to the pure strategy of playing idle by both the players, is the entry $\Xi_{k+1}^{\text{A}}(1,1)$, given by:
    \begin{equation*}
        V_{k}^{\text{A}}(x,\Xi_{k+1}^{\text{A}}) = g_k(x,\text{A}) + v_{k+1}^{\text{A}}.
    \end{equation*}
    
    ii) \underline{Pure strategy - \{Takeover, Idle\}}:
    The autonomous agent chooses to stay idle whereas the human agent chooses to takeover. 
    \noindent To realize such a pure strategy, we compare the entry $\Xi_{k+1}^{\text{A}}(2,1)$ against $\Xi_{k+1}^{\text{A}}(1,1)$ to obtain the condition:
    \begin{equation*}
        \begin{aligned}
            v_{k+1}^{\text{A}} - v_{k+1}^{\text{H}} \geq \dfrac{h_{k}(x)}{p_{k}}.
        \end{aligned}
    \end{equation*}
    Similarly, comparing $\Xi_{k+1}^{\text{A}}(2,1)$ against $\Xi_{k+1}^{\text{A}}(2,2)$ yields:
    \begin{equation*}
        \begin{aligned}
            p_{k}v_{k+1}^{\text{H}} + (1 - p_{k})v_{k+1}^{\text{A}} + h_{k}(x) & \leq v_{k+1}^{\text{H}} + h_{k}(x) + a_{k}(x), \\
            \Rightarrow v_{k+1}^{\text{A}} - v_{k+1}^{\text{H}} & \leq \dfrac{a_{k}(x)}{1 - p_{k}}.
        \end{aligned}
    \end{equation*}
    If the conditions derived for the strategy are satisfied, then such a strategy corresponds to an admissible NE. 
    The expected saddle-point corresponds to the entry $\Xi_{k+1}^{\text{A}}(1,2)$, given by:
    \begin{equation*}
        V_{k}^{\text{A}}(x,\Xi_{k+1}^{\text{A}}) = g_k(x,\text{A}) + p_{k}v_{k+1}^{\text{H}} + (1 - p_{k})v_{k+1}^{\text{A}} + h_{k}(x).
    \end{equation*}

    iii) \underline{Pure strategy - \{Takeover, Request to takeover\}}: The human agent takes over and the autonomous agent requests to takeover. Following the same process, we compare the entry $\Xi_{k+1}^{\text{A}}(2,2)$ against $\Xi_{k+1}^{\text{A}}(2,1)$ to obtain the condition:
    \begin{equation*}
        v_{k+1}^{\text{A}} - v_{k+1}^{\text{H}} > \frac{a_{k}(x)}{1 - p_{k}}.
    \end{equation*}
    Notice, we didn't compare $\Xi_{k+1}^{\text{A}}(2,2)$ to $\Xi_{k+1}^{\text{A}}(1,2)$, because the entry $\Xi_{k+1}^{\text{A}}(1,2)$ is not a NE. In other words, there is no incentive for any player to opt a policy which corresponds to the entry $\Xi_{k+1}^{\text{A}}(1,2)$, since $\Xi_{k+1}^{\text{A}}(1,1)$ is strictly better than $\Xi_{k+1}^{\text{A}}(1,2)$. 
    The saddle-point value for the derived policy is:
    \begin{equation*}
        V_{k}^{\text{A}}(x,\Xi_{k+1}^{\text{A}}) = g_{k}(x, \text{A}) + v_{k+1}^{\text{H}} + h_{k}(x) + a_{k}(x).
    \end{equation*}
    Collecting all the saddle-point values of the game for the derived pure strategy NE, we obtain the saddle-point value update equation over the  horizon of $L$ in~\eqref{eq:Val_gen_A}. The boundary conditions~\eqref{eq:b_cond_NE_Val_gen_FDC} imply that the saddle-point values at $k = L+1$ satisfy
    \begin{equation*}
        \begin{aligned}
            V_{L+1}^{\text{H}}(x,\mathbf{0}_{2 \times 2}) & = g_{L+1}^{\text{H}}(x,\text{H}), \\ V_{L+1}^{1}(x,\mathbf{0}_{2 \times 2}) & = g_{L+1}^{\text{A}}(x,\text{A}).
        \end{aligned}
    \end{equation*}
\end{proof}

For a finite cardinality of the state $\mathcal{X}$, fixed player policies $u_k$ and $w_k,  k \in \mathcal{K}$, and a finite-horizon $L$, Theorem~\ref{th:NE_Val_gen_FDC} yields an exact saddle-point value of the \texttt{Flip Co-op} game~\eqref{eq:opti_E_cost}. However, the computational and storage complexities  scale undesirably with the cardinality of $\mathcal{X}$, especially in continuous state spaces. For this purpose, in the next section, we will provide a parametric form of the saddle-point value for the case of linear dynamics with quadratic costs.

\section{\texttt{Flip Co-op} for LQ Problems}\label{sec:FlipCoop_Linear_Systems}
To address continuous state spaces arising in the \texttt{Flip Co-op} game, we restrict our attention to a linear dynamical system with quadratic costs (LQ problems). The dynamics of a linear system at time instant $k \in \mathcal{K}$, controlled by the human agent satisfies:
\begin{equation}
    \label{eq:H_control_dynamics}
    \begin{aligned}
        x_{k+1} & = F_{k}^{\text{H}}(x_k, u_k) := E_kx_k + B_ku_k,
    \end{aligned}
\end{equation}
where $E_{k} \in \mathbb{R}^{n \times n}$ denotes the state transition matrix, while $B_{k} \in \mathbb{R}^{n \times m}$ represents the human agent control matrix. Similarly, the dynamics of the linear system when the autonomous agent takes over satisfies:
\begin{equation}
    \label{eq:A_control_dynamics}
    \begin{aligned}
        x_{k+1} & = F_{k}^{\text{A}}(x_k, w_k) := E_kx_k + C_kw_k,
    \end{aligned}
\end{equation}
where $C_{k} \in \mathbb{R}^{n \times p}$ signifies the autonomous agent control matrix.
The \texttt{FlipDyn} dynamics~\eqref{eq:FlipDyn_compact} then reduces to:
\begin{align}\label{eq:linear_dynamics}
	x_{k+1} = E_{k}x_{k} + \mathbf{1}_{\text{H}}(\alpha_{k+1})B_{k}u_{k} + \mathbf{1}_{\text{A}}(\alpha_{k+1})H_{k}w_{k},
\end{align}
where $\mathbf{1}_{\mathcal{A}}: \alpha_{k+1} \to \{0,1\}$ is an indicator function, which maps to one if $\alpha_{k+1} = \mathcal{A}$ and zero otherwise.
The stage and takeover quadratic costs are:
\begin{gather}\label{eq:cost_quad}
        g_k(x,\alpha_k) = x^{\tp}G_k^{\alpha_k}x, \\
        h_k(x) = x^{\tp}H_kx, \ a_k(x) = x^{\tp}A_kx, \nonumber
\end{gather}
where $G_k^{\alpha_k} \in \mathbb{S}^{n \times n}_{+}, H_k \in \mathbb{S}^{n \times n}_{+}, A_k \in \mathbb{S}^{n \times n}_{+}$ are positive definite matrices. 

\medskip

\begin{remark}
The control policies for both players act mutually exclusive within their respective  \texttt{FlipDyn} state. Specifically, the human control policy $u_k$ affects the dynamics when the \texttt{FlipDyn} state $\alpha_k = \text{H}$, while the autonomous agent control policy $w_k$ comes into effect when $\alpha_k = \text{A}$.
\end{remark}

If we constrain the control polices of both players to function of the continuous state $x$, then the saddle-point value for each \texttt{FlipDyn} state can be represented purely as a function of the continuous state $x$ and \texttt{FlipDyn} state, in contrast to being contingent on both continuous state $x$ and the control input for each \texttt{FlipDyn} state. This restriction is formally stated in the subsequent assumption.

\medskip
\begin{assumption}\label{ast:linear_control_space}
    We restrict the control policies to be linear state-feedback in the continuous state $x$, described as:
    \begin{equation}\label{eq:linear_FD_control}
        u_k(x) := K_kx, \quad w_k(x) := W_kx,
    \end{equation}
    where $K_k \in \mathbb{R}^{m \times n}$ and $W_k \in \mathbb{R}^{p \times n}$ are human and autonomous agent control gains matrices, respectively. 
\end{assumption}

Under Assumption~\ref{ast:linear_control_space}, the human and autonomous agent dynamics can be compactly written as:
\begin{subequations}\label{eq:dynamics_HA_compact}
    \begin{align}
        x_{k+1} = \tilde{B}_{k}x_{k} := (E_{k} + B_{k}K_{k})x_{k}, \\
        x_{k+1} = \tilde{C}_{k}x_{k} := (E_{k} + C_{k}W_{k})x_{k}.
    \end{align}
\end{subequations}
Given the linear dynamics~\eqref{eq:dynamics_HA_compact}, we postulate a parametric form for the saddle-point value in each \texttt{FlipDyn} state as follows:
\begin{equation}
    \label{eq:para_form_al_QC}
    \begin{aligned}
        & V_{k}^{\text{H}}(x,\Xi_{k+1}^{\text{H}}) \Rightarrow V_{k}^{\text{H}}(x) := x^{\tp}P^{\text{H}}_kx, \\
        & V_{k}^{\text{A}}(x, \Xi_{k+1}^{\text{A}}) \Rightarrow V_{k}^{\text{A}}(x) := x^{\tp}P^{\text{A}}_kx,
    \end{aligned}
\end{equation}
where $P^{\text{H}}_{k} \in \mathbb{S}_{+}^{n \times n}$ and $P^{\text{A}}_{k}\in \mathbb{S}_{+}^{n \times n}$ correspond to the \texttt{FlipDyn} states $\alpha = \text{H}$ and $\text{A}$, respectively. 
We impose Assumption~\ref{ast:linear_control_space} to enable factoring out the state $x$ while computing the saddle-point value update backward in time. 

Next, we outline the NE takeover strategies of both the players, along with the corresponding saddle-point values for each of the \texttt{FlipDyn} state.

\begin{cor}\label{cor:NE_Val_FDC_H}
    (Case $\alpha_k = \text{H}$) The unique NE takeover strategies of the \texttt{FlipDyn Co-op} game~\eqref{eq:opti_E_cost} for every $k \in \mathcal{K}$, subject to the dynamics~\eqref{eq:dynamics_HA_compact}, with quadratic and takeover costs~\eqref{eq:cost_quad} and \texttt{FlipDyn} dynamics~\eqref{eq:FlipDyn_compact} are given by:
    \begin{align}\label{eq:TP_quadcost_H}
            \{\beta^{\text{H}*}_{k}, \gamma^{\text{H}*}_{k}\} = \begin{cases}
            \{0,0\}, & \text{if } x^{\tp}\tilde{P}_{k+1}x < \begin{matrix}
                x^{\tp}(H_{k} + A_{k})x,
            \end{matrix} \\[3 pt]
            \{1,1\}, & \text{otherwise}
            \end{cases} 
    \end{align}
    \begin{align}\label{eq:Val_quadcost_H}
        P_{k}^{\text{H}} = 
        \begin{cases}
            \begin{aligned}
				& G_k^{\text{H}} + \tilde{B}_{k}^{\tp}P_{k+1}^{\text{H}}\tilde{B}_{k},
            \end{aligned} &\text{if } \tilde{P}_{k+1} \prec \begin{matrix}
                 H_{k} + A_{k},
            \end{matrix} \\[5pt]
            \begin{aligned}
				& G_k^{\text{H}} + \tilde{C}_{k}^{\tp}P_{k+1}^{\text{A}}\tilde{C}_{k} \\ & + H_{k} + A_{k},
            \end{aligned} &\text{otherwise},
		\end{cases} 
    \end{align}
    where $\tilde{P}_{k+1} := \tilde{B}_{k}^{\tp}P_{k+1}^{\text{H}}\tilde{B}_{k} - \tilde{C}_{k}^{\tp}P_{k+1}^{\text{A}}\tilde{C}_{k}$.
    
    \noindent (Case $\alpha_k = \text{A}$) The unique NE takeover strategies are given by:
    \begin{align}
    \begin{split}\label{eq:TP_quadcost_A}
            \{\beta_{k}^{\text{A}*},\gamma_{k}^{\text{A}*}\}  = \begin{cases}
            \{0,0\}, & \text{if } \ 
                \begin{matrix}
                    x^{\tp}\tilde{P}_{k+1}x > \dfrac{x^{\tp}H_{k}x}{p_{k}}
                \end{matrix}, \\[10pt]
            \{1,0\}, & \text{if } \ 
                \begin{matrix}
                    x^{\tp}\tilde{P}_{k+1}x \geq \dfrac{x^{\tp}A_{k}x}{1 - p_{k}}  \\[5pt]
                    x^{\tp}\tilde{P}_{k+1}x \geq \dfrac{x^{\tp}H_{k}x}{p_{k}},
                \end{matrix} \\
            \{1,1\}, & \text{otherwise.}
            \end{cases} 
    \end{split}
    \end{align}
    The saddle-point value is given by:
    \begin{align}\label{eq:Val_quadcost_A}
        P_{k}^{\text{A}} = 
        \begin{cases}
            \begin{aligned}
				& G_k^{\text{A}} + \tilde{C}_{k}^{\tp}P_{k+1}^{\text{A}}\tilde{C}_{k},
            \end{aligned} &\text{if } \begin{matrix}
                    \tilde{P}_{k+1} \succ \dfrac{H_{k}}{p_{k}}
                \end{matrix}, \\[10pt]
            \begin{aligned}
				& G_k^{\text{A}} + p_{k}\tilde{B}_{k}^{\tp}P^{\text{H}}_{k+1}\tilde{C}_{k} + H_{k}  \\[5pt] & + (1 - p_{k})\tilde{C}_{k}^{\tp}P^{\text{A}}_{k+1}\tilde{C}_{k},
            \end{aligned} &\text{if } \begin{matrix}
                    \tilde{P}_{k+1} \succeq  \dfrac{A_{k}}{1 - p_{k}}  \\
                    \tilde{P}_{k+1} \preceq \dfrac{H_{k}}{p_{k}},
                \end{matrix} \\
            \begin{aligned}
                & G_k^{\text{A}} + \tilde{B}_{k}^{\tp}P_{k+1}^{\text{H}}\tilde{B}_{k} + \\ & A_{k} + H_{k}
            \end{aligned}
             & \text{otherwise}.
		\end{cases} 
    \end{align}
    The terminal conditions for the recursions~\eqref{eq:Val_quadcost_H} and~\eqref{eq:Val_quadcost_A} are:
    \begin{equation*}
        P_{L+1}^{\text{H}} := G_{L+1}^{\text{H}}, \quad P_{L+1}^{\text{A}} := G_{L+1}^{\text{A}}.
    \end{equation*} \frqed
\end{cor}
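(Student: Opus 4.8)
The plan is to prove the corollary by backward induction on $k$, verifying that the quadratic parametric form~\eqref{eq:para_form_al_QC} is propagated by the recursion of Theorem~\ref{th:NE_Val_gen_FDC}. Everything reduces to substituting the linear dynamics~\eqref{eq:dynamics_HA_compact} and the quadratic costs~\eqref{eq:cost_quad} into the general NE conditions~\eqref{eq:TP_gen_H}--\eqref{eq:Val_gen_A} and collecting terms. For the base case, at $k = L+1$ the boundary condition~\eqref{eq:b_cond_NE_Val_gen_FDC} gives $V_{L+1}^{\text{H}}(x) = g_{L+1}(x,\text{H}) = x^{\tp}G_{L+1}^{\text{H}}x$ and $V_{L+1}^{\text{A}}(x) = x^{\tp}G_{L+1}^{\text{A}}x$, so the parametric form holds with $P_{L+1}^{\text{H}} = G_{L+1}^{\text{H}}$ and $P_{L+1}^{\text{A}} = G_{L+1}^{\text{A}}$, which are positive definite by~\eqref{eq:cost_quad}.

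For the inductive step, I would assume $V_{k+1}^{\text{H}}(x) = x^{\tp}P_{k+1}^{\text{H}}x$ and $V_{k+1}^{\text{A}}(x) = x^{\tp}P_{k+1}^{\text{A}}x$ with $P_{k+1}^{\text{H}}, P_{k+1}^{\text{A}} \in \mathbb{S}_{+}^{n\times n}$, and use~\eqref{eq:dynamics_HA_compact} to evaluate the look-ahead terms~\eqref{eq:V_k_0}--\eqref{eq:V_k_1} as
\begin{align*}
    v_{k+1}^{\text{H}} = x^{\tp}\tilde{B}_{k}^{\tp}P_{k+1}^{\text{H}}\tilde{B}_{k}x, \qquad v_{k+1}^{\text{A}} = x^{\tp}\tilde{C}_{k}^{\tp}P_{k+1}^{\text{A}}\tilde{C}_{k}x.
\end{align*}
Substituting these together with $h_k(x) = x^{\tp}H_kx$ and $a_k(x) = x^{\tp}A_kx$ into the scalar thresholds of Theorem~\ref{th:NE_Val_gen_FDC} turns each comparison into a comparison of quadratic forms. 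For $\alpha_k = \text{H}$, the condition $v_{k+1}^{\text{H}} < v_{k+1}^{\text{A}} + h_k(x) + a_k(x)$ becomes $x^{\tp}\tilde{P}_{k+1}x < x^{\tp}(H_k + A_k)x$, and reading off the active branch of~\eqref{eq:Val_gen_H} recovers the recursion~\eqref{eq:Val_quadcost_H} for $P_k^{\text{H}}$. The case $\alpha_k = \text{A}$ is analogous: each entry of $\Xi_{k+1}^{\text{A}}$ in~\eqref{eq:Cost_to_go_A} is a quadratic form, so the three pairwise comparisons in the proof of Theorem~\ref{th:NE_Val_gen_FDC} become quadratic-form inequalities in $\tilde{P}_{k+1}$, $H_k$ and $A_k$, and the selected branch of~\eqref{eq:Val_gen_A} yields~\eqref{eq:Val_quadcost_A}. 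In every branch the update is a sum of congruence transforms $\tilde{B}_k^{\tp}P_{k+1}^{\text{H}}\tilde{B}_k$, $\tilde{C}_k^{\tp}P_{k+1}^{\text{A}}\tilde{C}_k$ and positive-definite cost matrices, so $P_k^{\text{H}}, P_k^{\text{A}} \in \mathbb{S}_{+}^{n\times n}$ and the inductive hypothesis is restored.

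The step I expect to be the crux is the passage from the \emph{state-dependent} scalar thresholds of Theorem~\ref{th:NE_Val_gen_FDC} to the \emph{state-independent} Loewner comparisons such as $\tilde{P}_{k+1} \prec H_k + A_k$. A quadratic-form inequality $x^{\tp}Mx < x^{\tp}Nx$ holding for one particular $x$ is strictly weaker than $M \prec N$, which demands it for all $x \neq 0$. The parametric form~\eqref{eq:para_form_al_QC} is preserved only if the active branch of the value recursion is the same for every $x$; otherwise $V_k$ is piecewise quadratic and cannot be written as a single $x^{\tp}P_kx$. Hence the clean recursions~\eqref{eq:Val_quadcost_H} and~\eqref{eq:Val_quadcost_A} rest on the difference matrices $\tilde{P}_{k+1} - (H_k + A_k)$ and their $\alpha_k = \text{A}$ analogues being sign-definite, so that a single matrix comparison selects the branch uniformly in the continuous state. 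Establishing this definiteness, rather than the algebra of substitution, is where the argument must be made carefully, and it is precisely what renders the equilibrium strategies and the value recursion \emph{independent of the continuous state}.
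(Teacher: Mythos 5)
Your proposal follows exactly the route of the paper's own proof, which is a three-sentence outline: substitute the parametric form~\eqref{eq:para_form_al_QC}, the closed-loop dynamics~\eqref{eq:dynamics_HA_compact}, and the quadratic costs~\eqref{eq:cost_quad} into the conditions and recursions of Theorem~\ref{th:NE_Val_gen_FDC} and collect terms. Your base case and inductive step carry out that computation correctly (incidentally, the faithful substitution gives $p_k\tilde{B}_k^{\tp}P_{k+1}^{\text{H}}\tilde{B}_k$ in the middle branch of~\eqref{eq:Val_quadcost_A}; the printed $p_k\tilde{B}_k^{\tp}P_{k+1}^{\text{H}}\tilde{C}_k$ is not even symmetric and is evidently a typo).

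Where you depart from the paper is your final paragraph, and the issue you isolate there is a genuine gap in the paper's argument, not a defect peculiar to your attempt. Theorem~\ref{th:NE_Val_gen_FDC} selects branches by comparing scalars evaluated at the realized state, so after substitution the branch conditions are inequalities between quadratic forms at a single $x$, as in~\eqref{eq:TP_quadcost_H} and~\eqref{eq:TP_quadcost_A}; the value recursions~\eqref{eq:Val_quadcost_H} and~\eqref{eq:Val_quadcost_A} are instead stated with Loewner orderings, i.e., uniformly in $x$. The two coincide only if the difference matrices ($\tilde{P}_{k+1} - (H_k+A_k)$ for $\alpha_k=\text{H}$, and $\tilde{P}_{k+1} - H_k/p_k$, $\tilde{P}_{k+1} - A_k/(1-p_k)$ for $\alpha_k=\text{A}$) are sign-(semi)definite at every stage, and nothing in~\eqref{eq:cost_quad}, Assumption~\ref{ast:general_costs}, or Assumption~\ref{ast:linear_control_space} guarantees this. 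If one of them is indefinite, different states occupy different branches, $V_k^{\alpha}$ is only piecewise quadratic, no single pair $P_k^{\text{H}},P_k^{\text{A}}$ satisfies the postulate~\eqref{eq:para_form_al_QC}, and the recursion then propagates matrices that are not the value function on part of the state space. So the corollary is exact only under an implicit branch-uniformity hypothesis, which should either be added as an explicit assumption or the claim weakened to hold on the region where a single branch is active. You flag this as the step requiring care but do not (and, from the stated hypotheses, cannot) establish it; in that sense your proposal is as complete as the paper's proof and more candid about where its validity actually rests.
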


\begin{proof}{[Outline]}
    The proof directly follows from Theorem~\ref{th:NE_Val_gen_FDC}. Substituting the parameteric form of the saddle-point value~\eqref{eq:para_form_al_QC}, dynamics~\eqref{eq:linear_dynamics} and quadratic costs~\eqref{eq:cost_quad} in~\eqref{eq:TP_gen_H} and~\eqref{eq:TP_gen_A}, yields the policies~\eqref{eq:TP_quadcost_H} and~\eqref{eq:TP_quadcost_A}. Similar substitutions yield the saddle-point recursions~\eqref{eq:Val_quadcost_H} and~\eqref{eq:Val_quadcost_A}.
\end{proof}

\medskip

Corollary~\ref{cor:NE_Val_FDC_H} presents a closed-form solution for the \texttt{FlipDyn Co-op}~\eqref{eq:opti_E_cost} game with NE takeover strategies independent of state. The advantage of Corollary~\ref{cor:NE_Val_FDC_H} is the ability to compute saddle-point value and takeover policies as a function of state feedback at the expense of restricting the solution to a quadratic linear system. The recursions~\eqref{eq:Val_quadcost_H} and~\eqref{eq:Val_quadcost_A} are computed a priori, whereas, the policies~\eqref{eq:TP_gen_H} and~\eqref{eq:TP_gen_A} are computed using the state-feedback information. 

\subsubsection*{\underline{A Numerical Example}} We now evaluate the results of  Corollary~\ref{cor:NE_Val_FDC_H} on a discrete-time scalar linear time-invariant (LTI) system for a finite-horizon of $L = 30$. The LTI is characterized by the coefficients of dynamics~\eqref{eq:dynamics_HA_compact}, given by:
\begin{subequations}
\begin{align*}
    E_k &:= E = 1.0, \\
    B_k &:= B = -0.9\,\Delta t, \\
    C_k &:= C = -0.6\,\Delta t, \quad \forall k \in \mathcal{K}.
\end{align*}
\end{subequations}
The coefficients indicate that $\tilde{B} := \tilde{B}_{k} = (1-0.9\Delta t) < \tilde{C}_{k} := \tilde{C} = (1 - 0.6\Delta t)$, implying the control performance is better under a human agent as compared to an autonomous agent. We choose such a coefficient reflecting that humans can adapt and perform under challenging conditions. We choose $\Delta t = 0.1$ for the numerical example. However, such performance comes at increased costs compared to the autonomous agent, indicated by: 
\begin{equation*}
    \begin{aligned}
        G^{\text{H}}_k &:= G^{\text{H}} = 1.2, \ G^{\text{A}}_k := G^{\text{A}} = 1.0, \\  H_k &:= H = 0.35, \
        A_k := A = 0.2, \forall k \in \mathcal{K}.
    \end{aligned}
\end{equation*}
The state costs and takeover costs for the human agent is higher compared to that of the autonomous agent. 
We perform the numerical simulations with a fixed human intent probability $p_{k} = p, \forall k \in \mathcal{K}$, and determine the impact of varying the parameter $p$ on the underlying takeover policy. 

The saddle-point value parameters $P^{\text{H}}_{k}$ and $P^{\text{A}}_{k}$ for $p = 0.4, 0.55, 0.7$ are shown in Figures~\autoref{fig:SPV_p_04},~\autoref{fig:SPV_p_55}, and~\autoref{fig:SPV_p_7}, respectively. The corresponding takeover policies of both the agents are shown in Figures~\autoref{fig:TP_p_04},~\autoref{fig:TP_p_55}, and~\autoref{fig:TP_p_7}, respectively. When $p = 0.4$, we observe the saddle-point values of each agent cross over each others, with value of the autonomous agent higher at the start and flipping towards the end of the horizon. The impact of such a cross over saddle-point value is shown in Figure~\autoref{fig:TP_p_04}, where  both  agents choose to flip the state via a takeover when $\alpha = \text{H}$, and remain idle for the whole horizon. Next, for $p = 0.55$, we observe the saddle-point parameter of the autonomous agent jump between different values at the start of the horizon and reach a monotonic behavior at the later part of the horizon. Such a jumping nature is reflected in the takeover policy when $\alpha = \text{A}$, shown in Figure~\autoref{fig:TP_p_55}. Such a policy indicates to switch over the control to human agent, and once switched the control remains with the human, indicated when $\alpha = \text{H}$. However, towards the end of the horizon, both the autonomous agent and human agent request for the autonomous agent to takeover. Finally, when $p = 0.7$, the saddle-point value of both  agents are close to each other at the start of the horizon, which reflects the action of the human agent to take control of system when $\alpha = \text{A}$. The saddle-point value and takeover policies for the later part of the horizon remain the same. 

This numerical example illustrates the use of saddle-point value parameters in determining the takeover strategies for each player. Additionally, it provides insight into the system's behavior for the given costs and the system's stability properties. These insights are useful while designing the costs, which further impact the control and takeover policies.

\begin{figure*}[ht]
	\begin{center}
		\subfloat[]{\includegraphics[width = 0.32\linewidth]{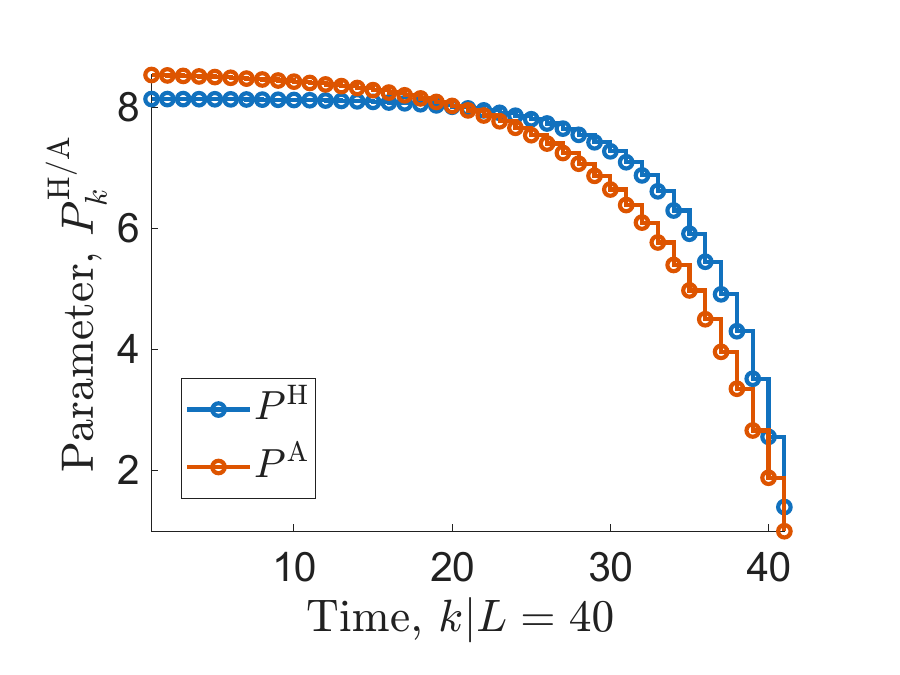}
			\label{fig:SPV_p_04}}
		\subfloat[]{\includegraphics[width = 0.32\linewidth]{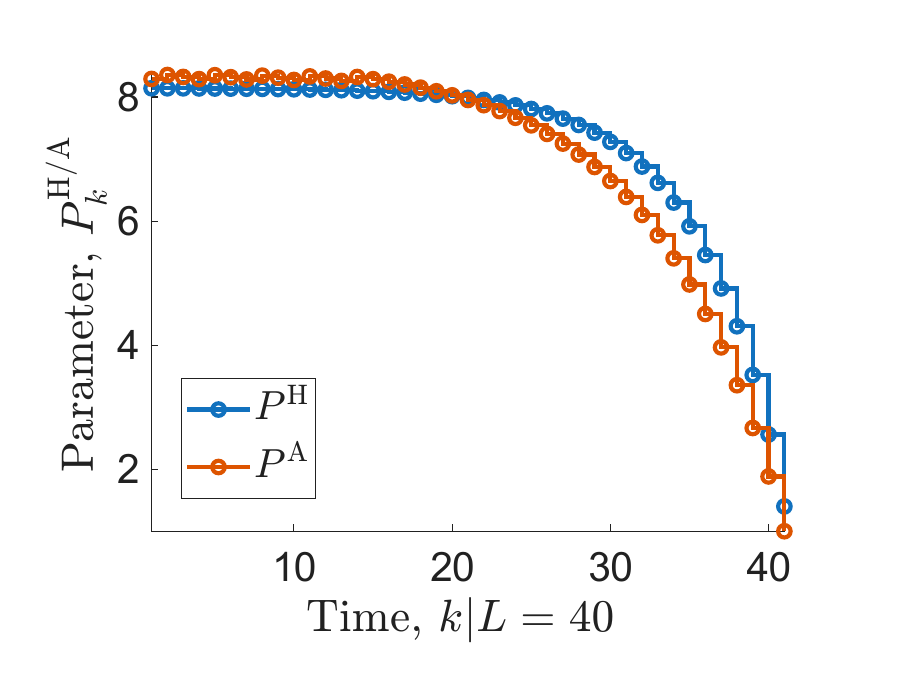}
			\label{fig:SPV_p_55}}
        \subfloat[]{\includegraphics[width = 0.32\linewidth]{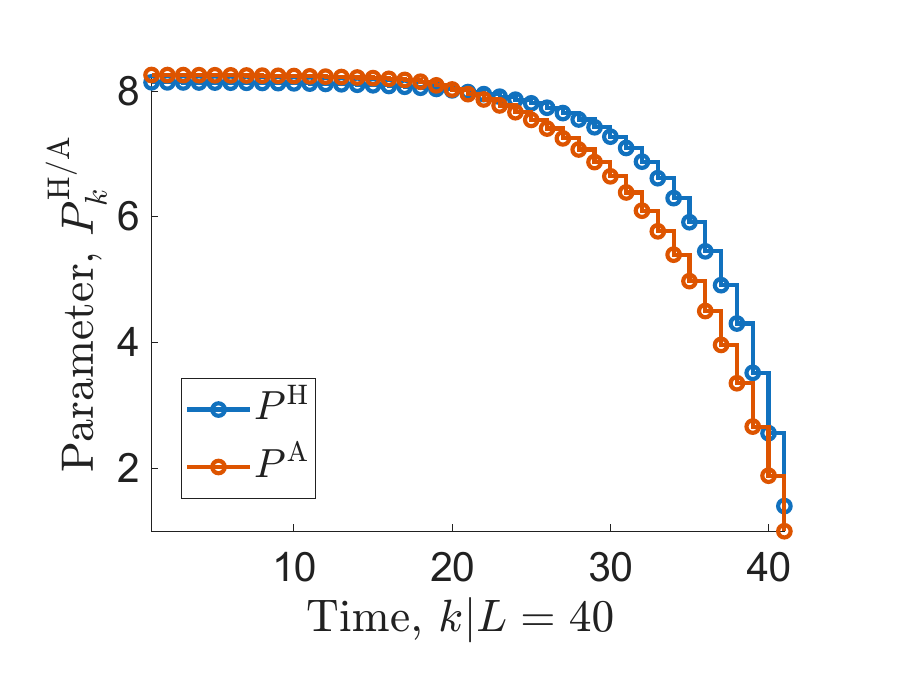}
			\label{fig:SPV_p_7}}
		\caption{\small Saddle-point value parameters for a scalar LTI with human intent probability a) $p = 0.4$, b) $p = 0.55$, and c)$p = 0.7$.}
	\end{center}
\end{figure*}

\begin{figure*}[ht]
	\begin{center}
		\subfloat[]{\includegraphics[width = 0.32\linewidth]{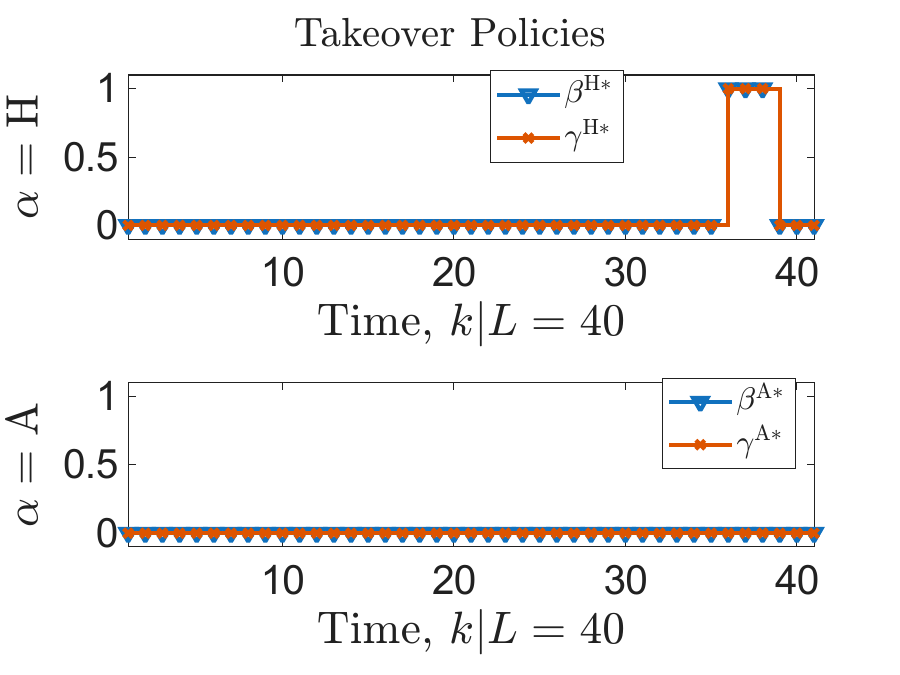}
			\label{fig:TP_p_04}}
		\subfloat[]{\includegraphics[width = 0.32\linewidth]{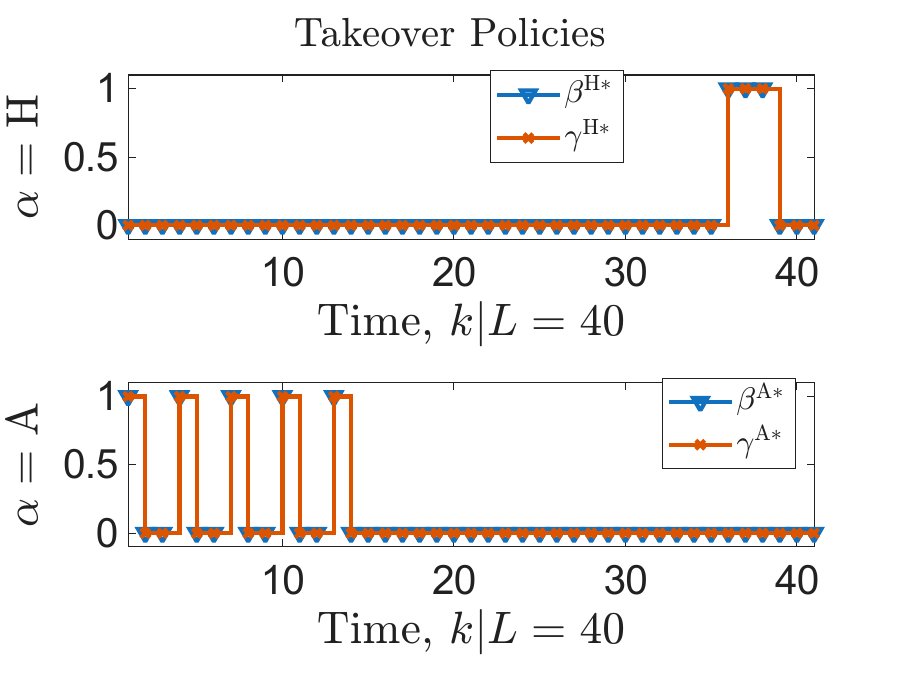}
			\label{fig:TP_p_55}}
        \subfloat[]{\includegraphics[width = 0.32\linewidth]{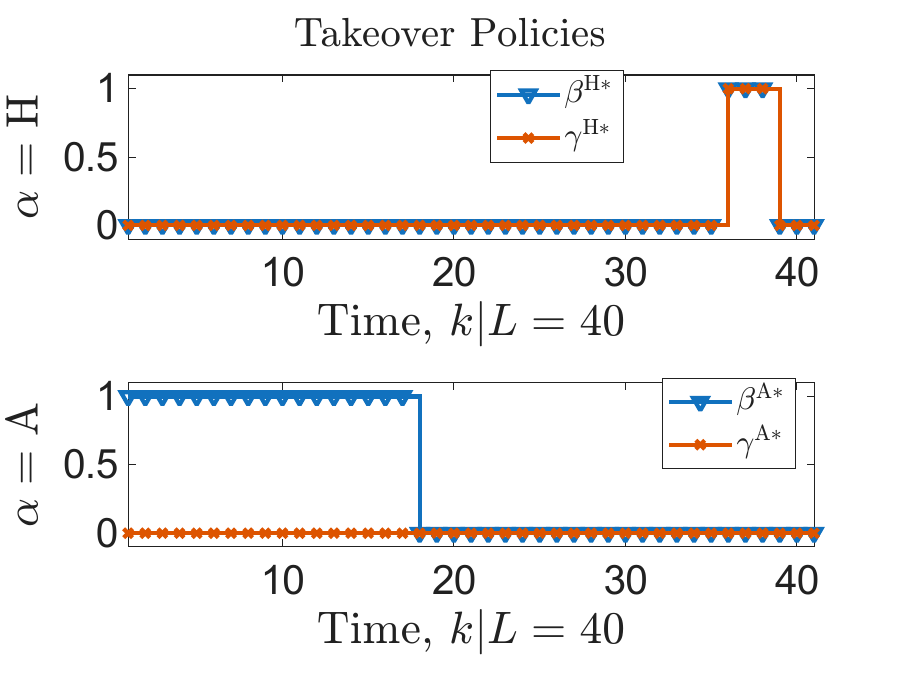}
			\label{fig:TP_p_7}}
		\caption{\small Takeover policy of the human ($\beta$) and autonomous agent LTI with human intent probability a) $p = 0.4$, b) $p = 0.55$, and c)$p = 0.7$.}
	\end{center}
\end{figure*}

\section{\texttt{Flip Co-Op} Potential LQ Game}\label{sec:FlipCoop_Potential_games}
The results derived in Theorem~\ref{th:NE_Val_gen_FDC} and Corollary~\ref{cor:NE_Val_FDC_H} are contingent on two core assumptions; i) knowing the underlying intent probability $p_{k}$ of the human agent and ii) identical utilities between the human and autonomous agent.
However, knowing the true intent probability $p_{k}$ is a challenge, as it cannot be measured directly and potentially a function of both state and time. Furthermore, we postulate, the probability $p_{k}$ is an artifact of mismatch between the utilities of the human and autonomous agent.
If the utilities of both the players were perfectly aligned, the underlying probability will collapse to the value of $1$. 
To alleviate such a limitation of the model, we will reformulate the problem as a potential game to solve the underlying shared takeover problem.

The state evolution under human and autonomous agent remain the same as defined in~\eqref{eq:human_dynamics} and~\eqref{eq:autonomous_dynamics}. The \texttt{FlipDyn} dynamics are given by~\eqref{eq:FlipDyn_compact} with the human agent intent probability $p_{k} = 1, \forall k \in \mathcal{K}$. Given a non-zero initial state $x_{1}$, the objective for human and autonomous agent are represented as:
\begin{equation}\label{eq:obj_non_zero_sum}
	\begin{aligned}
		J^{\tilde{\alpha}}(x_{1}, \alpha_{1}, \{\pi^{\text{H}}_{\mathbf{L}}\}, \{\pi^{\text{A}}_{\mathbf{L}}\}) = g_{L+1}^{\tilde{\alpha}}(x_{L+1}, \alpha_{L+1}) + \\ \sum_{t=1}^{L} g_t^{\tilde{\alpha}}(x_t, \alpha_t) + \pi_{t}^{\text{H}}h_t^{\tilde{\alpha}}(x_t) + \pi^{\text{A}}_{t}a_t^{\tilde{\alpha}}(x_t),
	\end{aligned}
\end{equation}
where $\tilde{\alpha} \in \{\text{H}^{\dag},\text{A}^{\dag}\}$ corresponds to either the human ($\text{H}^{\dag}$) or autonomous agent ($\text{A}^{\dag}$). Compared to~\eqref{eq:obj_def_alpha}, ~\eqref{eq:obj_non_zero_sum} indicates a distinct difference in the objective between the human and autonomous agent driven by the costs. 
We skip over the cost mappings in~\eqref{eq:obj_non_zero_sum}, as they are identical to~\eqref{eq:obj_def_alpha} for a given agent $\tilde{\alpha}$. 
We seek to find NE solution of the game~\eqref{eq:obj_non_zero_sum} in the same space of behavioral strategies, as defined in~\eqref{eq:behav_pol}:
\begin{equation*}
	y_{k}^{\alpha_{k}} = \begin{bmatrix}
		1 - \beta_{k}^{\alpha_{k}} & \beta_{k}^{\alpha_{k}}
	\end{bmatrix}^{\text{T}} \text{ and \hfill} z_{k}^{\alpha_{k}} = \begin{bmatrix}
	1 - \gamma_{k}^{\alpha_{k}} & \gamma_{k}^{\alpha_{k}}
\end{bmatrix}^{\text{T}},
\end{equation*}
such that $\{y_{\mathbf{L}}, z_{\mathbf{L}}\}$ correspond to the sequence of policy pairs over finite-horizon $L$. 
The expected outcome of the game~\eqref{eq:obj_non_zero_sum} for the human and autonomous agent is given as:
\begin{equation}\label{eq:E_obj_human_auto}
    \begin{aligned}
        & J_{E}^{\text{H}^{\dag}}(x_1, \alpha_{1}, y_{\mathbf{L}}, z_{\mathbf{L}}) :=  \mathbb{E}[J^{\text{H}^{\dag}}( x_{1}, \alpha_{1}, \{\pi^{\text{H}}_{L}\}, \{\pi^{\text{A}}_{L}\})], \\
        & J_{E}^{\text{A}^{\dag}}(x_1, \alpha_{1}, y_{\mathbf{L}}, z_{\mathbf{L}}) :=  \mathbb{E}[J^{\text{A}^{\dag}}( x_{1}, \alpha_{1}, \{\pi^{\text{H}}_{L}\}, \{\pi^{\text{A}}_{L}\})],
    \end{aligned}
\end{equation}
where the expectation is evaluated with respect to the policy pair distributions $\{y_{\mathbf{L}},z_{\mathbf{L}}\}$. 
For any non-zero initial state $x_{0}$ and $\alpha_{0}$, a pair policies $\{y_{\mathbf{L}}^{\bullet},z_{\mathbf{L}}^{\bullet}\}$ over a finite-horizon $L$ is said to be in NE if:
\begin{equation}\label{eq:Nash_outcome_pair}
    \begin{aligned}
         & J_E^{\text{H}^{\dag}}(x_0, \alpha_0, y_{\mathbf{L}}^{\bullet}, z_{\mathbf{L}}^{\bullet}) \leq J_E^{\text{H}^{\dag}}(x_0, \alpha_0, y_{\mathbf{L}}, z_{\mathbf{L}}^{\bullet}), \quad y_{\mathbf{L}} \in \Delta_{2}^{L} \\ 
         & J_E^{\text{A}^{\dag}}(x_0, \alpha_0, y_{\mathbf{L}}^{\bullet}, z_{\mathbf{L}}^{\bullet})
         \leq J_E^{\text{A}^{\dag}}(x_0, \alpha_0, y_{\mathbf{L}}^{\bullet}, z_{\mathbf{L}}), \quad z_{\mathbf{L}} \in \Delta_{2}^{L},
    \end{aligned}
\end{equation}
and $\left(J_E^{\text{H}^{\dag}}(x_0, \alpha_0, y_{\mathbf{L}}^{*}, z_{\mathbf{L}}^{*}), J_E^{\text{A}^{\dag}}(x_0, \alpha_0, y_{\mathbf{L}}^{*}, z_{\mathbf{L}}^{*})\right)$ is called the Nash outcome (NO) of the game. 
The \texttt{Flip Co-op} game in such a setting is defined by the cost pair~\eqref{eq:E_obj_human_auto} subject to the dynamics~\eqref{eq:human_dynamics},~\eqref{eq:autonomous_dynamics}, and~\eqref{eq:FlipDyn_compact}. 

Although NO~\eqref{eq:Nash_outcome_pair} of the game is defined for arbitrary costs and dynamics, we will restrict our solution to the class of linear dynamics~\eqref{eq:H_control_dynamics},~\eqref{eq:A_control_dynamics} under Assumption~\ref{ast:linear_control_space} and quadratic costs~\eqref{eq:cost_quad}. 

\subsection{Saddle-point value}
Analogous to~\eqref{eq:Cost_to_go_H} and~\eqref{eq:Cost_to_go_A}, we will define a cost-to-go matrix corresponding to the human and autonomous agent for a given \texttt{FlipDyn} state $\alpha_{k}$ at time $k$.

\underline{Case $\alpha_k = \text{H}$}: As the objective function~\eqref{eq:obj_non_zero_sum} differs between a human and autonomous agent, each \texttt{FlipDyn} state will contain two distinct cost-to-go matrix, $\Xi_{k}^{\text{H},\text{H}^{\dag}}$ and  $\Xi_{k}^{\text{H},\text{A}^{\dag}}$. These correspond to individual agent's perspective. The entries of the cost-to-go matrix $\Xi_{k+1}^{\text{H}, \tilde{\alpha}}$ for $\tilde{\alpha} = \{\text{H}^{\dag},\text{A}^{\dag}\}$ corresponding to each pair of takeover actions are given by:

\begin{equation}\label{eq:Cost_to_go_H_alpha}
    \begin{aligned}
		& \begin{matrix} & \hphantom{000} \text{Idle} & & \hphantom{v_{k+1}^0(.,.} \text{Takeover}\end{matrix} \\
		\begin{matrix} \text{Idle} \\[5pt] \text{Request to}\\\text{takeover} \end{matrix} & \underbrace{\begin{bmatrix}
			v_{k+1}^{\text{H},\tilde{\alpha}} &  v_{k+1}^{\text{H},\tilde{\alpha}} + a_k^{\tilde{\alpha}}(x)  \\[8pt]
			v_{k+1}^{\text{H},\tilde{\alpha}} + h_k^{\tilde{\alpha}}(x) &  v_{k+1}^{\text{A},\tilde{\alpha}} + h_k^{\tilde{\alpha}}(x) + a_k^{\tilde{\alpha}}(x) \\[5pt] 
		\end{bmatrix}}_{\Xi_{k+1}^{\text{H},\tilde{\alpha}}}
	\end{aligned},
\end{equation}
\begin{align}
    \label{eq:V_k^HHdag} \text{where } \ & v_{k+1}^{\text{H},\tilde{\alpha}} := V_{k+1}^{\text{H},\tilde{\alpha}}\left(F_k^{\text{H}}(x,u_k),\Xi_{k+2}^{\text{H},\tilde{\alpha}}\right), \\
    \label{eq:V_k^AHdag} & v_{k+1}^{\text{A},\tilde{\alpha}} := V_{k+1}^{\text{A},\tilde{\alpha}}\left(F_k^{\text{A}}(x,w_k),\Xi_{k+2}^{\text{A},\tilde{\alpha}}\right).
\end{align}
The row and column entries of $\Xi_{k+1}^{\text{H}, \tilde{\alpha}}$ are governed by \texttt{FlipDyn} dynamics~\eqref{eq:flip_state_H}. At time $k$, the saddle-point value in \texttt{FlipDyn} state $\alpha_{k} = \text{H}$ for $\tilde{\alpha} = \{\text{H}^{\dag},\text{A}^{\dag}\}$ satisfies: 
\begin{equation}
    \label{eq:V_H_saddle_point_alpha}
	V^{\text{H},\tilde{\alpha}}_k(x, \Xi_{k+1}^{\text{H},\tilde{\alpha}}) = g_k^{\tilde{\alpha}}(x,\text{H})  + \Val(\Xi^{\text{H},\tilde{\alpha}}_{k+1}). 
\end{equation}

\underline{Case $\alpha_k = \text{A}$}: Similar to the case $\alpha_{k} = \text{H}$, the entries of the cost-to-go matrix $\Xi_{k+1}^{\text{A}, \tilde{\alpha}}$ for $\tilde{\alpha} = \{\text{H}^{\dag},\text{A}^{\dag}\}$ corresponding to each pair of takeover actions are given by:
\begin{equation}\label{eq:Cost_to_go_A_alpha}
    \begin{aligned}
		& \begin{matrix} & \hphantom{0000} \text{Idle} & & \hphantom{v_{k+1}^0000} \text{Request to takeover}\end{matrix} \\
		\begin{matrix} \text{Idle} \\[8pt] \text{Takeover}\end{matrix} & \underbrace{\begin{bmatrix}
			v_{k+1}^{\text{A},\tilde{\alpha}} \hphantom{000}  &  \hphantom{00} v_{k+1}^{\text{A},\tilde{\alpha}} + a_k^{\tilde{\alpha}}(x) \hphantom{0}  \\[5pt]
			\begin{matrix}
                    v_{k+1}^{\text{H},\tilde{\alpha}}
                        + h_k^{\tilde{\alpha}}(x)       
                \end{matrix}
              &  \begin{matrix}
                    v_{k+1}^{\text{H},\tilde{\alpha}} + h_k^{\tilde{\alpha}}(x)
                        + a_k^{\tilde{\alpha}}(x)       
                \end{matrix}
		\end{bmatrix}}_{\Xi_{k+1}^{\text{A},\tilde{\alpha}}}.
	\end{aligned}
\end{equation}
Compared to~\eqref{eq:Cost_to_go_A}, in~\eqref{eq:Cost_to_go_A_alpha} the intent probability is $p_{k} = 1$, which results in a change in cost-to-go matrix. At time $k$, the saddle-point value in \texttt{FlipDyn} state $\alpha_{k} = \text{A}$ for $\tilde{\alpha} = \{\text{H}^{\dag},\text{A}^{\dag}\}$ satisfies: 
\begin{equation}
    \label{eq:V_A_saddle_point_alpha}
	V^{\text{A},\tilde{\alpha}}_k(x, \Xi_{k+1}^{\text{H},\tilde{\alpha}}) = g_k^{\tilde{\alpha}}(x,\text{A})  + \Val(\Xi^{\text{H},\tilde{\alpha}}_{k+1}). 
\end{equation}

\subsection{Bimatrix potential game}

While the saddle-point value functions~\eqref{eq:V_H_saddle_point_alpha} and~\eqref{eq:V_A_saddle_point_alpha} characterize the non-identical \texttt{Flip Co-Op} game~\eqref{eq:E_obj_human_auto} and the associated NE takeover policies, they come with notable challenges. In particular, solving the coupled dynamic programming equations for each player requires accounting four distinct saddle-point value functions and their associated cost-to-go matrices. Solving for each saddle-point value equates to solving a four-way coupled quadratic problem, with non-convex constraints yielding only a local minima.  

To alleviate these challenges, we propose reformulating the game as a \emph{potential bimatrix game}~\cite{mondererPotentialGames1996}. The core idea is to construct a single scalar function, \emph{potential function} whose changes under any unilateral deviation by either player exactly match the change in that player's objective. This allows the game to be analyzed using a unified potential cost-to-go matrix that captures the incentives of both agents in a common structure.

Given the FlipDyn state $\alpha_k \in \{\text{H}, \text{A}\}$, we define the \emph{potential game matrix} $\mathcal{P}_k^{\alpha_{k}}$ at time $k$ as:
\begin{equation}\label{eq:Potential_Game_Matrix}
    \mathcal{P}_{k+1}^{\alpha_{k}} :=
    \begin{bmatrix}
        \Psi_{k+1,11}^{\alpha_{k}} & \Psi_{k+1,12}^{\alpha_{k}}\\[5pt]
        \Psi_{k+1,21}^{\alpha_{k}} & \Psi_{k+1,22}^{\alpha_{k}}
    \end{bmatrix},
\end{equation}
where each entry $\Psi_{ij}^{\alpha_{k}}$ represents the cost-to-go of the system under the corresponding pair of takeover actions (row: human, column: autonomous agent). The $2 \times 2$ bimatrices $\Xi_{k}^{\text{H},\text{H}^{\dag}}$ and $\Xi_{k}^{\text{H},\text{A}^{\dag}}$ is an exact potential game, if there exists a potential $\mathcal{P}_k^{\text{H}}$, such that:

\begin{equation}\label{eq:potential_game_linear_eq}
    \begin{aligned}
        & \Psi_{k+1,11}^{\text{H}} - \Psi_{k+1,21}^{\text{H}} = v_{k+1}^{\text{H},\text{H}^{\dag}} - v_{k+1}^{\text{H},\text{H}^{\dag}} - h_{k}^{\text{H}^{\dag}}(x), \\
        & \Psi_{k+1,12}^{\text{H}} - \Psi_{k+1,22}^{\text{H}} = v_{k+1}^{\text{H},\text{H}^{\dag}} - v_{k+1}^{\text{A},\text{H}^{\dag}} - h_{k}^{\text{H}^{\dag}}(x), \\
        & \Psi_{k+1,11}^{\text{H}} - \Psi_{k+1,12}^{\text{H}} = v_{k+1}^{\text{H},\text{A}^{\dag}} - v_{k+1}^{\text{H},\text{A}^{\dag}} - a_{k}^{\text{A}^{\dag}}(x), \\
        & \Psi_{k+1,21}^{\text{H}} - \Psi_{k+1,22}^{\text{H}} = v_{k+1}^{\text{H},\text{A}^{\dag}} - v_{k+1}^{\text{A},\text{A}^{\dag}} - h_{k}^{\text{H}^{\dag}}(x).
    \end{aligned}
\end{equation}
The set of linear equations~\eqref{eq:potential_game_linear_eq} has a solution if and only if:
\begin{equation}
    \label{eq:existence_potential_func}
    \begin{aligned}
        & \Xi_{k}^{\alpha_{k},\text{H}^{\dag}}\text{\small{(1,1)}}   - \Xi_{k}^{\alpha_{k},\text{H}^{\dag}}\text{\small{(1,2)}} - \Xi_{k}^{\alpha_{k},\text{H}^{\dag}}\text{\small{(2,1)}} + \Xi_{k}^{\alpha_{k},\text{H}^{\dag}}\text{\small{(2,2)}}
        \\ & =
        \Xi_{k}^{\alpha_{k},\text{A}^{\dag}}\text{\small{(1,1)}} - \Xi_{k}^{\alpha_{k},\text{A}^{\dag}}\text{\small{(1,2)}} -  \Xi_{k}^{\alpha_{k},\text{A}^{\dag}}\text{\small{(2,1)}} + \Xi_{k}^{\alpha_{k},\text{A}^{\dag}}\text{\small{(2,2)}}, \\
        & \Rightarrow v_{k+1}^{\text{A},\text{H}^{\dag}} - v_{k+1}^{\text{H},\text{H}^{\dag}} = v_{k+1}^{\text{A},\text{A}^{\dag}} - v_{k+1}^{\text{H},\text{A}^{\dag}}.
    \end{aligned}
\end{equation}
The existence condition indicates that the difference in the saddle-point value between the hybrid states must be identical from both human and autonomous agent perspective.
A candidate solution~\cite{hespanha2017noncooperative} which satisfies~\eqref{eq:potential_game_linear_eq} is:
\begin{equation}
    \label{eq:potential_game_sol_H}
    \begin{aligned}
        & \Psi_{k+1,22}^{\text{H}} = 0,\\
        & \Psi_{k+1,12}^{\text{H}} = v_{k+1}^{\text{H},\text{H}^{\dag}} - v_{k+1}^{\text{A},\text{H}^{\dag}} - h^{\text{H}^{\dag}}_{k}(x), \\
        & \Psi_{k+1,21}^{\text{H}} = v_{k+1}^{\text{H},\text{H}^{\dag}} - v_{k+1}^{\text{A},\text{H}^{\dag}} - a^{\text{A}^{\dag}}_{k}(x), \\
        & \Psi_{k+1,11}^{\text{H}} = v_{k+1}^{\text{H},\text{H}^{\dag}} - v_{k+1}^{\text{A},\text{H}^{\dag}} - h^{\text{H}^{\dag}}_{k}(x) - a^{\text{A}^{\dag}}_{k}(x).
    \end{aligned}
\end{equation}
Similarly, for $\alpha_{k} = \text{A}$, the condition for existence~\eqref{eq:existence_potential_func} results in zero on both sides, which is always satisfied for any choice of cost.
The entries of a potential function matrix that yields an exact potential game is given by:
\begin{equation}
    \label{eq:potential_game_sol_A}
    \begin{aligned}
        & \Psi_{k+1,22}^{\text{A}} = 0,\\
        & \Psi_{k+1,12}^{\text{A}} = v_{k+1}^{\text{A},\text{H}^{\dag}} - v_{k+1}^{\text{H},\text{H}^{\dag}} - h^{\text{H}^{\dag}}_{k}(x), \\
        & \Psi_{k+1,21}^{\text{A}} = - a^{\text{A}^{\dag}}_{k}(x), \\
        & \Psi_{k+1,11}^{\text{A}} = v_{k+1}^{\text{A},\text{H}^{\dag}} - v_{k+1}^{\text{H},\text{H}^{\dag}} - h^{\text{H}^{\dag}}_{k}(x)-  a^{\text{A}^{\dag}}_{k}(x).
    \end{aligned}
\end{equation}
With the potential game~\eqref{eq:potential_game_sol_H} and~\eqref{eq:potential_game_sol_A} established for each of the \texttt{FlipDyn} state, we will characterize the NE takeover strategies and saddle-point values over the time-horizon $L$.

\subsection{Potential game admissible NE takeover strategies}
The stage and takeover quadratic costs are given by:
\begin{gather}\label{eq:cost_quad_PG}
        g_k(x,\alpha_k)^{\tilde{\alpha}} = x^{\tp}G_k^{\alpha_k,\tilde{\alpha}}x, \\
        h_k(x) = x^{\tp}H^{\tilde{\alpha}}_kx, \ a_k(x) = x^{\tp}A^{\tilde{\alpha}}_kx, \ \tilde{\alpha} \in \{\text{H}^{\dag},\text{A}^{\dag}\}. \nonumber
\end{gather}
Since we consider linear quadratic dynamics~\eqref{eq:H_control_dynamics},~\eqref{eq:A_control_dynamics} with quadratic costs~\eqref{eq:cost_quad_PG} and state feedback control law~\eqref{eq:linear_FD_control} (Assumption~\ref{ast:linear_control_space}), we postulate the same parametric form for the saddle-point value in each of \texttt{FlipDyn} state as:
\begin{equation}
    \label{eq:para_potential_SPV}
    V^{\text{H},\tilde{\alpha}}_{k}(x) := x^{\tp}Q_{k}^{\text{H},\tilde{\alpha}}x, \ V^{\text{A},\tilde{\alpha}}_{k}(x) := x^{\tp}Q_{k}^{\text{A},\tilde{\alpha}}x, \ \tilde{\alpha} := \{\text{H}^{\dag}, \text{A}^{\dag}\},
\end{equation}
where $Q_{k}^{\text{H},\tilde{\alpha}} \in \mathbb{S}_{+}^{n \times n}$ and $Q_{k}^{\text{A},\tilde{\alpha}} \in \mathbb{S}_{+}^{n \times n}$ for the associated \texttt{FlipDyn} state and player.

Next, we outline the admissible NE takeover strategies of both the players, along with the corresponding saddle-point values for each of the \texttt{FlipDyn} state.

\begin{cor}\label{cor:Adm_NE_Val_FDC_H}
    (Case $\alpha_k = \text{H}$) The non-identical \texttt{Flip Co-op} game~\eqref{eq:E_obj_human_auto} governed by linear dynamics~\eqref{eq:dynamics_HA_compact} and \texttt{FlipDyn} dynamics~\eqref{eq:FlipDyn_compact} with quadratic costs~\eqref{eq:cost_quad}, admits a admissible NE takeover strategies at time $k \in \mathcal{K}$, given by:
    \begin{align}
    \begin{split}\label{eq:Adm_TP_quadcost_H}
            \{\beta^{\text{H}\bullet}_{k}, \gamma^{\text{H}\bullet}_{k}\} = \begin{cases}
            \{0,0\}, & \text{if } x^{\tp}Q^{\text{H},\text{H}^{\dag}}_{k+1}x < \begin{matrix}
                x^{\tp}Q^{\text{A},\text{H}^{\dag}}_{k+1}x + \\  x^{\tp}(H^{\text{H}^{\dag}}_{k} + A^{\text{A}^{\dag}}_{k})x,
            \end{matrix} \\[5 pt]
            \{1,1\}, & \text{otherwise } 
            \end{cases} 
    \end{split}
    \end{align}
    \begin{align}\label{eq:Adm_Val_quadcost_H}
        Q_{k}^{\text{H},\tilde{\alpha}} = 
        \begin{cases}
            \begin{aligned}
				& G_k^{\text{H},\tilde{\alpha}} + Q_{k+1}^{\text{H},\tilde{\alpha}}, 
            \end{aligned} &\text{if } Q^{\text{H},\text{H}^{\dag}}_{k+1} \prec \begin{matrix}
                Q^{\text{A},\text{H}^{\dag}}_{k+1} + H^{\text{H}^{\dag}}_{k} \\ + A^{\text{A}^{\dag}}_{k},
            \end{matrix} \\
            \begin{aligned}
				& G_k^{\text{H},\tilde{\alpha}} + Q_{k+1}^{\text{A},\tilde{\alpha}} \\ & + H_{k}^{\tilde{\alpha}} + A_{k}^{\tilde{\alpha}},
            \end{aligned} &\text{otherwise }.
		\end{cases} 
    \end{align}
    (Case $\alpha_k = \text{A}$) The unique NE takeover strategies are given by:
    \begin{align}
    \begin{split}\label{eq:Adm_TP_quadcost_A}
            \{\beta_{k}^{\text{A}\bullet},\gamma_{k}^{\text{A}\bullet}\}  = \begin{cases}
            \{0,0\}, & \text{if } \ 
                \begin{matrix}
                    x^{\tp}(Q^{\text{A},\text{H}^{\dag}}_{k+1} - Q^{\text{H},\text{H}^{\dag}}_{k+1})x < \\ x^{\tp}H^{\text{H}^{\dag}}_{k}x
                \end{matrix}, \\[10pt]
            \{1,0\}, & \text{otherwise. }
            \end{cases} 
    \end{split}
    \end{align}
    The saddle-point value is given by:
    \begin{align}\label{eq:Adm_Val_quadcost_A}
        Q_{k}^{\text{A},\tilde{\alpha}} = 
        \begin{cases}
            \begin{aligned}
				& G_k^{\text{A}, \tilde{\alpha}} + Q_{k+1}^{\text{A},\tilde{\alpha}},
            \end{aligned} &\text{if } \begin{matrix}
                    x^{\tp}(Q^{\text{A},\text{H}^{\dag}}_{k+1} - Q^{\text{H},\text{H}^{\dag}}_{k+1})x < \\ x^{\tp}H^{\text{H}^{\dag}}_{k}x
                \end{matrix}, \\[10pt]
            \begin{aligned}
				& G_k^{\text{A},\tilde{\alpha}} + Q_{k+1}^{\text{H},\tilde{\alpha}} \\ & + H_{k}^{\tilde{\alpha}},
            \end{aligned} &\text{otherwise }
		\end{cases} 
    \end{align}
    The terminal conditions for the recursions~\eqref{eq:Val_quadcost_H} and~\eqref{eq:Val_quadcost_A} are:
    \begin{equation*}
        Q_{L+1}^{\text{H},\tilde{\alpha}} := G_{L+1}^{\text{H},\tilde{\alpha}}, \quad Q_{L+1}^{\text{A},\tilde{\alpha}} := G_{L+1}^{\text{A},\tilde{\alpha}}.
    \end{equation*} \frqed
\end{cor}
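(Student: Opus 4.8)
The plan is to reproduce the case analysis of Theorem~\ref{th:NE_Val_gen_FDC} and Corollary~\ref{cor:NE_Val_FDC_H}, but now carried out on the single potential matrix $\mathcal{P}_{k+1}^{\alpha_k}$ rather than on the two per-agent cost-to-go bimatrices. The fact I would invoke first is the defining property of an exact potential game~\cite{mondererPotentialGames1996}: with the entries of $\mathcal{P}_{k+1}^{\alpha_k}$ fixed as in~\eqref{eq:potential_game_sol_H} and~\eqref{eq:potential_game_sol_A}, any unilateral deviation of a player changes that player's objective by exactly the induced change in $\mathcal{P}_{k+1}^{\alpha_k}$; hence the pure-strategy NE of the bimatrix game $(\Xi_{k+1}^{\alpha_k,\text{H}^{\dag}},\Xi_{k+1}^{\alpha_k,\text{A}^{\dag}})$ coincide with the pure-strategy minimizers of $\mathcal{P}_{k+1}^{\alpha_k}$. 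This collapses the four-way coupled search into minimizing a single $2\times 2$ matrix per \texttt{FlipDyn} state. I would note at the outset that this reduction is valid only where the compatibility condition~\eqref{eq:existence_potential_func} holds: for $\alpha_k=\text{A}$ both sides vanish identically, while for $\alpha_k=\text{H}$ it is the standing hypothesis that makes~\eqref{eq:potential_game_linear_eq}, and therefore~\eqref{eq:potential_game_sol_H}, consistent.

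For each \texttt{FlipDyn} state I would then enumerate and order the four entries of $\mathcal{P}_{k+1}^{\alpha_k}$ under Assumption~\ref{ast:general_costs}. For $\alpha_k=\text{H}$, writing $\delta := v_{k+1}^{\text{H},\text{H}^{\dag}} - v_{k+1}^{\text{A},\text{H}^{\dag}}$, the entries of~\eqref{eq:potential_game_sol_H} are $\Psi_{11}=\delta - h_k^{\text{H}^{\dag}} - a_k^{\text{A}^{\dag}}$, $\Psi_{12}=\delta - h_k^{\text{H}^{\dag}}$, $\Psi_{21}=\delta - a_k^{\text{A}^{\dag}}$, and $\Psi_{22}=0$. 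Nonnegativity of the takeover costs makes $\Psi_{11}$ no larger than $\Psi_{12}$ and $\Psi_{21}$, so the minimizer is either the $(1,1)$ entry (both idle, $\{0,0\}$) or the $(2,2)$ entry (both takeover, $\{1,1\}$), and the switch is governed solely by the sign of $\delta-(h_k^{\text{H}^{\dag}}+a_k^{\text{A}^{\dag}})$; this is exactly the threshold in~\eqref{eq:Adm_TP_quadcost_H}. For $\alpha_k=\text{A}$, the entries of~\eqref{eq:potential_game_sol_A} are $\Psi_{11}=\delta'-h_k^{\text{H}^{\dag}}-a_k^{\text{A}^{\dag}}$, $\Psi_{12}=\delta'-h_k^{\text{H}^{\dag}}$, $\Psi_{21}=-a_k^{\text{A}^{\dag}}$, $\Psi_{22}=0$, with $\delta':=v_{k+1}^{\text{A},\text{H}^{\dag}}-v_{k+1}^{\text{H},\text{H}^{\dag}}$; comparing $\Psi_{21}$ against the others shows the minimizer is $(1,1)$ (i.e. $\{0,0\}$) precisely when $\delta'<h_k^{\text{H}^{\dag}}$ and is the $(2,1)$ entry (i.e. $\{1,0\}$) otherwise, matching~\eqref{eq:Adm_TP_quadcost_A}.

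With the equilibrium profile fixed, I would read the value of each agent off its \emph{own} cost-to-go matrix $\Xi_{k+1}^{\alpha_k,\tilde\alpha}$ at that profile (the potential matrix encodes only incentives, not values) and add the stage cost $g_k^{\tilde\alpha}(x,\alpha_k)$, as in~\eqref{eq:V_H_saddle_point_alpha} and~\eqref{eq:V_A_saddle_point_alpha}. Substituting the quadratic ansatz~\eqref{eq:para_potential_SPV}, the closed-loop dynamics~\eqref{eq:dynamics_HA_compact}, and the quadratic costs~\eqref{eq:cost_quad_PG}, and factoring out $x$ (legitimate under Assumption~\ref{ast:linear_control_space}), propagates the next-stage values through the closed-loop maps, $v_{k+1}^{\text{H},\tilde\alpha}\mapsto\tilde B_k^{\tp}Q_{k+1}^{\text{H},\tilde\alpha}\tilde B_k$ and $v_{k+1}^{\text{A},\tilde\alpha}\mapsto\tilde C_k^{\tp}Q_{k+1}^{\text{A},\tilde\alpha}\tilde C_k$, turning the scalar thresholds into the state-dependent quadratic-form policy tests and the Loewner value comparisons, and yielding the recursions~\eqref{eq:Adm_Val_quadcost_H} and~\eqref{eq:Adm_Val_quadcost_A}; the terminal conditions follow from $\Xi_{L+2}^{\alpha,\tilde\alpha}=\mathbf{0}$ exactly as in~\eqref{eq:b_cond_NE_Val_gen_FDC}.

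I expect the main obstacle to be the equivalence between NE of the bimatrix game and minimizers of the potential in the presence of asymmetric override authority. Concretely, I must (i) verify that the candidate entries~\eqref{eq:potential_game_sol_H} and~\eqref{eq:potential_game_sol_A} satisfy every difference equation in~\eqref{eq:potential_game_linear_eq}, which for $\alpha_k=\text{H}$ reduces precisely to~\eqref{eq:existence_potential_func}, and (ii) certify \emph{admissibility} of the selected minimizer, i.e. rule out a spurious off-diagonal equilibrium such as the $(1,2)$ profile. As in the proof of Theorem~\ref{th:NE_Val_gen_FDC}, the $(1,2)$ entry is excluded by showing it is strictly worse than $(1,1)$ under Assumption~\ref{ast:general_costs}; making this domination explicit before collecting cases is the step that keeps the enumeration of admissible NE exhaustive.
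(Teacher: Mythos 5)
Your proposal is correct and takes essentially the same route as the paper's own proof: identify the admissible pure NE by ordering the entries of the potential matrices \eqref{eq:potential_game_sol_H}--\eqref{eq:potential_game_sol_A} under nonnegative costs (the selection reduces to $\Psi_{k+1,11}^{\alpha_k} \lessgtr \Psi_{k+1,22}^{\alpha_k} = 0$), then recover each agent's value from its \emph{own} cost-to-go matrix $\Xi_{k+1}^{\alpha_k,\tilde{\alpha}}$ rather than from the potential; you are in fact more complete, since the paper writes out only $\alpha_k=\text{H}$ and leaves the quadratic substitution and the exclusion of off-diagonal equilibria largely implicit. The only caveat: your closed-loop propagation $v_{k+1}^{\text{H},\tilde{\alpha}}\mapsto \tilde{B}_k^{\tp}Q_{k+1}^{\text{H},\tilde{\alpha}}\tilde{B}_k$, $v_{k+1}^{\text{A},\tilde{\alpha}}\mapsto \tilde{C}_k^{\tp}Q_{k+1}^{\text{A},\tilde{\alpha}}\tilde{C}_k$ (consistent with Corollary~\ref{cor:NE_Val_FDC_H}) yields recursions conjugated by $\tilde{B}_k,\tilde{C}_k$, whereas \eqref{eq:Adm_Val_quadcost_H} and \eqref{eq:Adm_Val_quadcost_A} as printed omit that conjugation --- a discrepancy in the paper's stated recursion rather than a flaw in your argument.
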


\begin{proof}
    We will prove the corollary for \texttt{FlipDyn} state $\alpha_{k} = \text{H}$ and omit the proof for $\alpha_{k} = \text{A}$, as they are analogous. To determine the NE takeover policies, we will analyze the potential game matrix $\mathcal{P}_{k+1}^{\text{H}}$. Since the costs are quadratic~\eqref{eq:cost_quad}, comparing the entry $\Psi_{11}^{\text{H}}$ with respect to entries of the same row and column, yields the following condition:
    \begin{equation*}
    \begin{aligned}
        \Psi_{k+1,11}^{\text{H}} \leq \Psi_{k+1,11}^{\text{H}} - a_{k}^{\text{A}^{\dag}}(x), \\ 
        \Psi_{k+1,11}^{\text{H}} \leq \Psi_{k+1,11}^{\text{H}} - h_{k}^{\text{A}^{\dag}}(x).
    \end{aligned}
    \end{equation*}
    Such a condition, indicates that if the condition
    \begin{equation*}
        \Psi_{k+1,11}^{\text{H}} < 0
    \end{equation*}
    is satisfied, then it corresponds to the pure admissible NE policy of playing idle; else, it corresponds to the pure admissible NE policy of \{takeover, request to takeover\} by the autonomous and human agent, respectively, summarized in~\eqref{eq:Adm_TP_quadcost_H}. For the corresponding NE policy, the saddle-point value for each agent is associated with the matrices, $\Xi_{k+1}^{\text{H}^{\dag}}$ and $\Xi_{k+1}^{\text{A}^{\dag}}$ and not $\Psi_{11}^{\text{H}}$, described in~\eqref{eq:Adm_Val_quadcost_H}. 
\end{proof}

\section{Application to Tracking}\label{sec:Application}
We now demonstrate the applicability of the \texttt{Flip Co-op} framework on a trajectory tracking problem involving shared autonomy. 
Specifically, we consider a discrete-time linear system representing a vehicle navigating a path comprising a straight segment, a left arc (turn), and another straight segment, illustrated in Figure~\ref{fig:Illustration_application}. The system can be controlled either by a human or an autonomous agent, with each control mode associated with distinct cost structures and control effectiveness. The objective is to determine, over a finite-horizon, the optimal takeover strategy: i.e., when the human should assume control versus when the autonomous agent should. This setup allows us to evaluate how the theoretical saddle-point strategies developed under the LQ formulation translate to practical scenarios involving cost-performance trade-offs and intent-driven human-autonomy switching.

\begin{figure}[ht]
	\begin{center}
		\subfloat[]{\includegraphics[width = 0.55\linewidth]{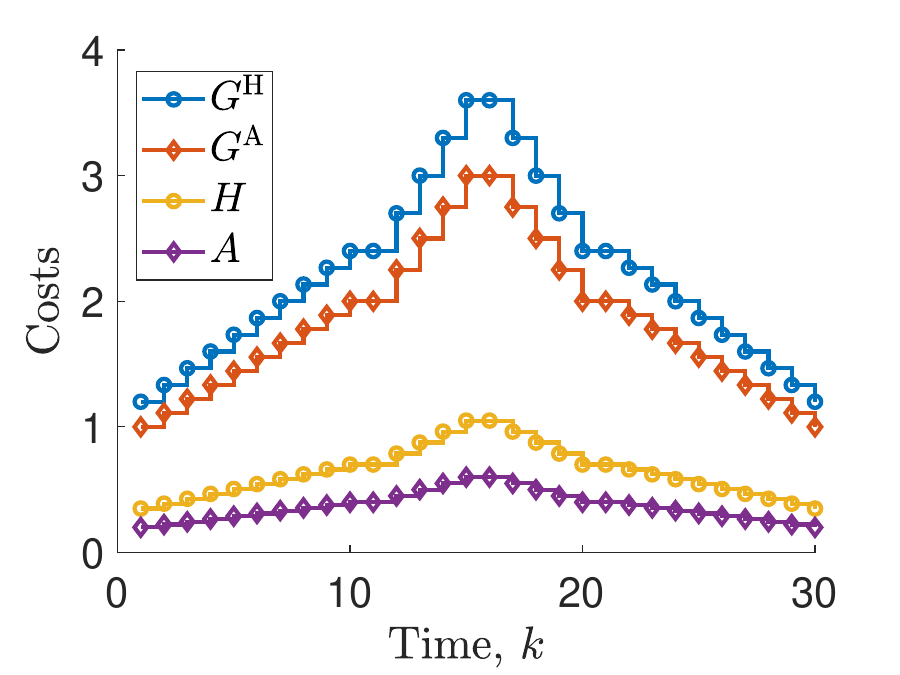}
			\label{fig:LQT_costs}}
		\subfloat[]{\includegraphics[width = 0.35\linewidth]{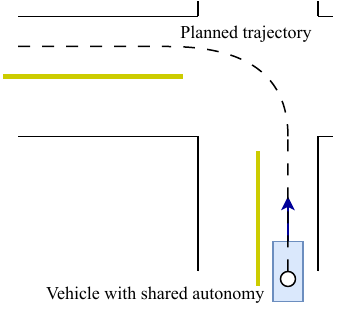}
			\label{fig:Illustration_application}}
		\caption{\small a) The state and takeover costs associated with the application. b) Illustration of the task for vehicle with shared autonomy. The dotted line indicates the desired trajectory.}
	\end{center}
\end{figure}

\subsection*{\texttt{Flip Co-op} LQ game}
The state and takeover costs corresponding to the path are illustrated in Figure~\ref{fig:LQT_costs}. 
The rising and falling patterns reflect increasing costs along the initial straight segment, peaking during the curved segment, and decreasing again as the vehicle returns to a straight path. 
We assign higher costs to the human agent compared to the autonomous agent.
For both agents, we assume dynamics of the form:
\[
x_{k+1} = (E + BK_{k})x_{k}, \quad  x_{k+1} = (E + CW_{k})x_{k}, k \in \mathcal{K}. 
\]
Given the path-dependent costs and arbitrary control matrices ($B$ and $C := wB, w < 1$), we solve for the control gains $K_{k}$ and $W_{k}$ using a standard backward dynamic programming method. 
These gains allow us to use the compact dynamics form~\eqref{eq:dynamics_HA_compact} as referenced in Corollary~\ref{cor:NE_Val_FDC_H}. 
The human intent probability $p_{k}$ is assumed to increase from 0.1 to 0.8 during the straight-to-curved transition, then decrease back to 0.1 in the final straight segment. 
Using Corollary~\ref{cor:NE_Val_FDC_H}, we solve for the takeover strategies and associated saddle-point values. We present results for two scenarios: (a) the autonomous agent controls the vehicle during a significant portion of the horizon, including the curved segment, and
(b) the human agent maintains control over most of the trajectory. 
The corresponding saddle-point values are shown in Figures~\ref{fig:SPV_p_LQT_case_a} and~\ref{fig:SPV_p_LQT_case_b}.

\begin{figure*}[ht]
	\begin{center}
		\subfloat[]{\includegraphics[width = 0.24\linewidth]{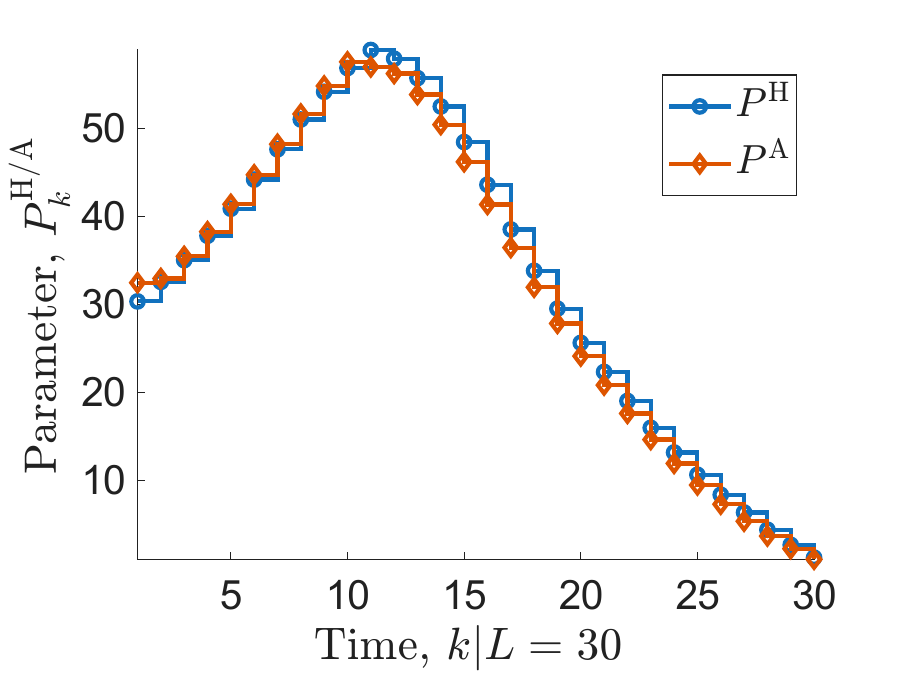}
			\label{fig:SPV_p_LQT_case_a}}
        \subfloat[]{\includegraphics[width = 0.24\linewidth]{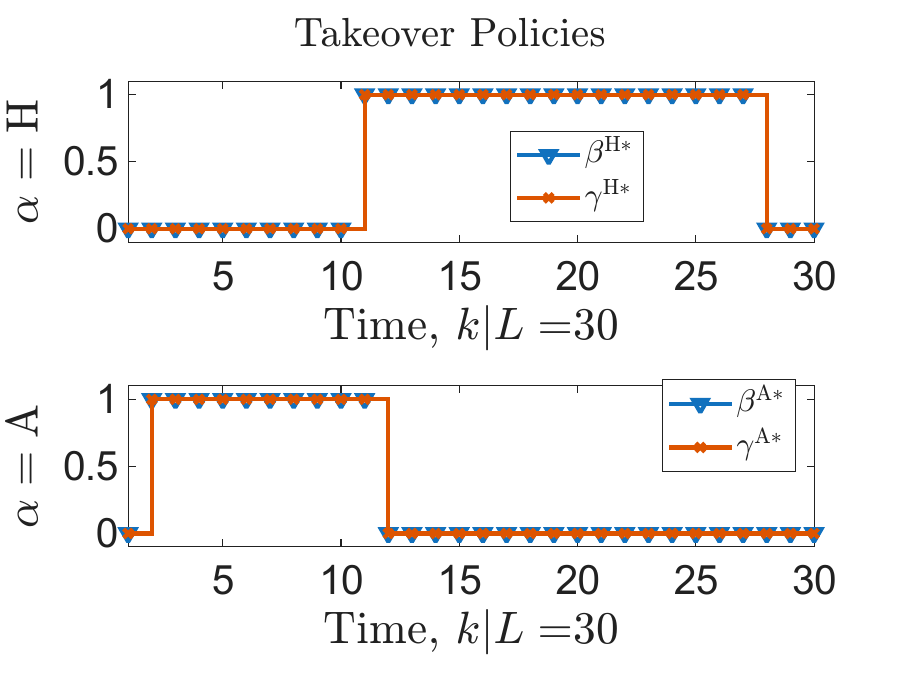}
			\label{fig:takeover_policy_plot_p_case_a}}
		\subfloat[]{\includegraphics[width = 0.24\linewidth]{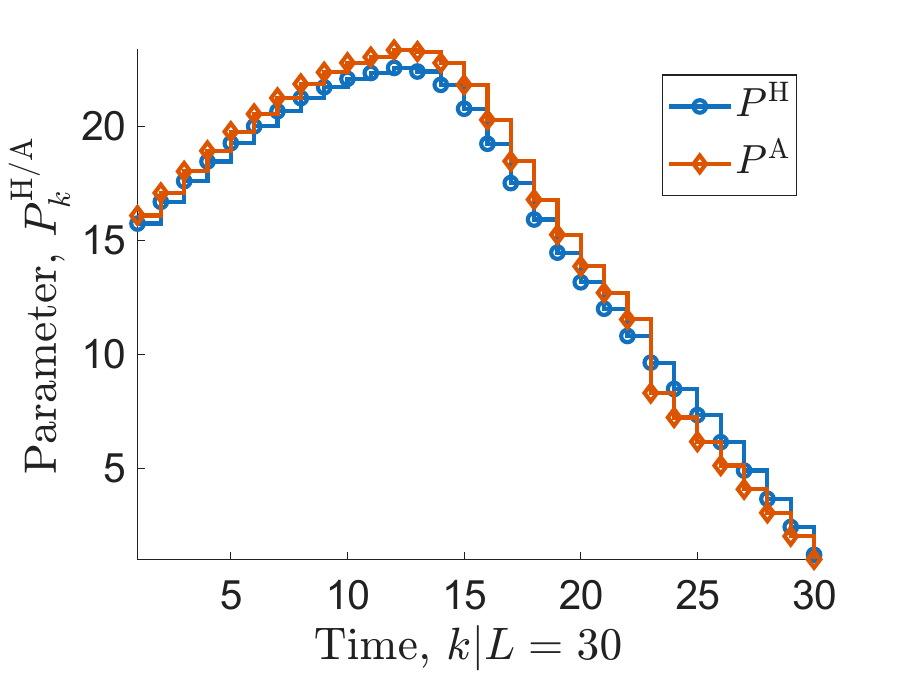}
			\label{fig:SPV_p_LQT_case_b}}
        \subfloat[]{\includegraphics[width = 0.24\linewidth]{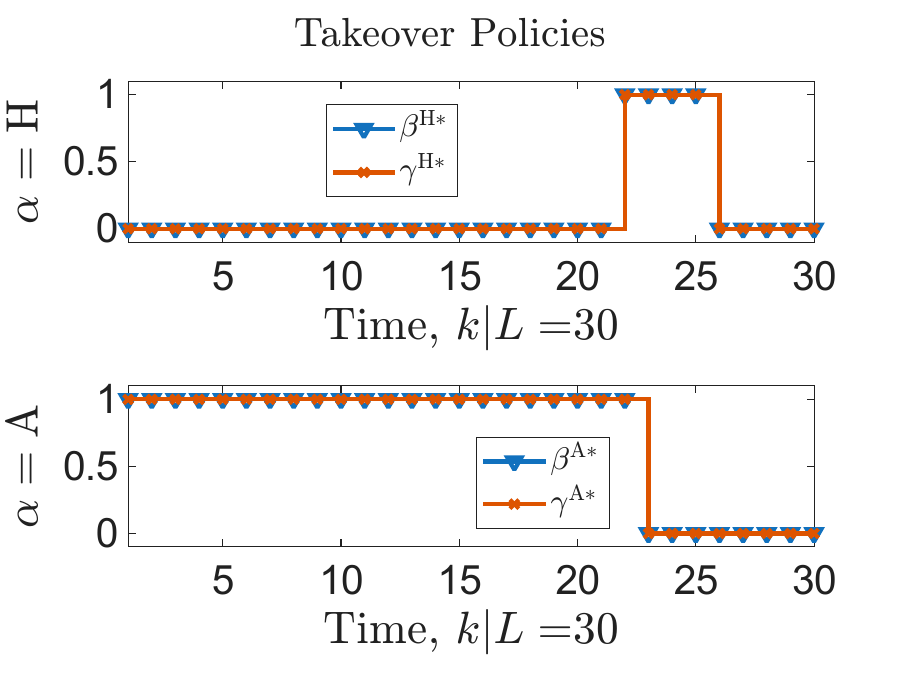}
			\label{fig:takeover_policy_plot_p_case_b}}
		\caption{\small a) The state and takeover costs associated with the application. b) Illustration of the task for vehicle with shared autonomy. The dotted line indicates the desired trajectory.}
	\end{center}
\end{figure*}
\underline{Case a:} The takeover policy when the autonomous agent controls the vehicle for a significant portion of the trajectory, including the curved segment, is shown in Figure~\ref{fig:takeover_policy_plot_p_case_a}. When the \texttt{FlipDyn} state is $\alpha = \text{H}$, the human retains control during the first straight segment ($\{\beta^{H*}, \gamma^{H*}\} = \{0,0\}$), followed by a request from the autonomous agent to take over in the curved segment, and ending with the human regaining control. In contrast, when $\alpha = \text{A}$, the human initially requests a takeover in the first straight segment, while no takeover requests are made during the curved and final straight segments.

\underline{Case b:} A similar analysis is shown in Figure~\ref{fig:takeover_policy_plot_p_case_b}. Here, when the vehicle starts under human control ($\alpha = \text{H}$), it remains so for most of the horizon, with a request from the autonomous agent to take over near the end. Conversely, when the vehicle starts under autonomous control ($\alpha = \text{A}$), the human requests a takeover almost immediately and maintains control for the majority of the trajectory.

These results highlight the effectiveness and flexibility of the proposed framework in reasoning about agent takeovers in shared autonomy. The ability to dynamically assign control based on evolving costs, intent, and dynamics illustrates the value of saddle-point strategies in hybrid decision-making environments.

\section{Conclusion and Future Directions}\label{sec:Conclusion}
This paper developed \texttt{Flip Co-op}, a cooperative game-theoretic framework for takeover in shared autonomy.
The framework yields Nash equilibrium takeover strategies that go beyond heuristic blending or switching. 
We established the existence and characterization of equilibria under stochastic human intent and derived closed-form recursions for linear–quadratic systems, enabling efficient computation of equilibrium policies in continuous state spaces. 
A potential-game reformulation further extended the model to partially misaligned utilities, ensuring tractability while capturing deviations in human intent. 
Application to vehicle trajectory tracking demonstrated how equilibrium takeover strategies adapt across path-dependent costs and intent probabilities, highlighting trade-offs between human adaptability and autonomous efficiency.

Future work will focus on advancing both theory and deployment. A key direction is to develop methods for learning the saddle-point value functions from data, enabling adaptive approximation of equilibrium policies in high-dimensional or nonlinear systems where closed-form recursions are intractable. 
Online inference of dynamic human intent, coupled with reinforcement or inverse game-theoretic learning, will further align takeover strategies with observed human behavior. 
Experimental validation on robotic platforms will be essential to assess robustness under sensing noise, delays, and workload variations. 
Finally, extensions to multi-human or networked autonomy scenarios will allow \texttt{Flip Co-op} to address cooperative takeovers at scale, moving toward safety-assured shared autonomy in complex cyber–physical systems.

\section{References}

\bibliographystyle{IEEEtran}
\bibliography{ref_arxiv}

\end{document}